
\documentclass[a4paper,UKenglish,cleveref, autoref, thm-restate]{lipics-v2021}

\hideLIPIcs  


\bibliographystyle{plainurl}

\title{Computing Vertex and Edge Connectivity of Graphs Embedded with Crossings\footnote{A preliminary version of this paper appeared in ESA 2024 \cite{esa_paper}.}} 


\author{Therese Biedl}{David R. Cheriton School of Computer Science, University of Waterloo, Canada}{biedl@uwaterloo.ca}{https://orcid.org/0000-0002-9003-3783}{}

\author{Prosenjit Bose}{School of Computer Science, Carleton University, Ottawa, Canada}{jit@scs.carleton.ca}{https://orcid.org/0000-0002-8906-0573}{}

\author{Karthik Murali}{School of Computer Science, Carleton University, Ottawa, Canada}{KarthikMurali@cmail.carleton.ca}{https://orcid.org/0000-0002-1825-0097}{}


\authorrunning{T.Biedl, P. Bose and K. Murali} 

\Copyright{Therese Biedl, Prosenjit Bose and Karthik Murali} 

\ccsdesc[500]{Discrete mathematics~Graph theory} 

\keywords{Vertex Connectivity, Edge Connectivity, Crossings, Linear Time} 

\category{} 




\nolinenumbers 

\usepackage[dvipsnames]{xcolor}
\usepackage[]{todonotes}




\newcommand{\calT}{\mathcal{T}}
\newcommand{\X}{\ensuremath{\times}}

\usepackage{mathtools}
\DeclarePairedDelimiter\ceil{\lceil}{\rceil}
\DeclarePairedDelimiter\floor{\lfloor}{\rfloor}

\usepackage{framed}
\usepackage{cleveref}
\usepackage{tcolorbox}
\usepackage{tabularx}

\Crefname{claim}{Claim}{Claims}

\begin{document}
\date{}
\maketitle

\abstract
{Vertex connectivity and edge connectivity are fundamental concepts in graph theory that have been widely studied from both structural and algorithmic perspectives. The focus of this paper is on computing these two parameters for graphs embedded on the plane with crossings. For planar graphs---which can be embedded on the plane without any crossings---it has long been known that vertex and edge connectivity can be computed in linear time. Recently, the algorithm for vertex connectivity was extended from planar graphs to 1-plane graphs (where each edge is crossed at most once) without $\times$-crossings---these are crossings whose endpoints induce a matching. The key insight, for both these classes of graphs, is that any two vertices/edges of a minimum vertex/edge cut have small face-distance (distance measured by number of faces) in the embedding. In this paper, we attempt at a comprehensive generalization of this idea to a wider class of graphs embedded on the plane. Our method works for all those embedded graphs where every pair of crossing edges is connected by a path whose vertices and edges have a small face-distance from the crossing point. Important examples of such graphs include optimal 2-planar and optimal 3-planar graphs, $d$-map graphs, $d$-framed graphs, graphs with bounded crossing number, and $k$-plane graphs with bounded number of $\times$-crossings. For all these graph classes, we get a linear-time algorithm for computing vertex and edge connectivity.}

\keywords{Vertex Connectivity, Edge Connectivity, Crossings, Linear Time}

\maketitle



\newpage

\graphicspath{{Figures/}}


\section{Introduction}

The connectivity of a graph is an important measure of fault-tolerance, and mainly comes in two variants. The \textit{vertex connectivity} of a graph $G$, denoted $\kappa(G)$, is the minimum size of a set of vertices whose removal makes the graph disconnected. Such a set of vertices is called a \textit{minimum vertex cut} of $G$. An \textit{edge cut} of a graph $G$ is a set of edges whose removal makes the graph disconnected. The \textit{size of an edge cut}, for a graph without edge-weights, is the number of edges in the edge cut; for an edge-weighted graph, it is the sum of weights of all edges in the edge cut. (We always assume that edge-weighted graphs are simple because any set of parallel edges can be replaced by a single edge whose weight is the sum of weights of all edges in the set.) A \textit{minimum edge cut} of a graph, with or without edge-weights, is an edge cut of minimum possible size. For edge-unweighted graphs only, we use the term \textit{edge connectivity} to denote the size of a minimum edge cut. For both weighted and unweighted graphs, we use $\lambda(G)$ to denote the the size of a minimum edge cut. In what follows, we give a brief overview of the rich literature on algorithms for computing vertex connectivity and minimum edge cuts of general graphs.

\subsection{General Graphs}

We first discuss results concerning vertex connectivity. Let $G = (V,E)$ be a graph where $n := |V(G)|$ and $m := |E(G)|$. There exist linear (i.e. $O(m{+}n)$) time algorithms to decide whether $\kappa(G) \leq 3$  \cite{tarjan1972depth, hopcroft1973dividing}. In 1969, Kleitman \cite{kleitman1969methods} showed how to test $\kappa(G)\leq k$ in time $O(k^2nm)$. For $\kappa(G) \in O(1)$, the first $O(n^2)$ algorithm was by Nagamochi and Ibaraki \cite{DBLP:journals/algorithmica/NagamochiI92}. The fastest deterministic algorithm is by Gabow and takes time $O(m \cdot (n + \min\{k^{5/2}, kn^{3/4}\}))$ \cite{gabow2006using}. Recently, Saranurak and Yingchareonthawornchai \cite{SaranurakY22} gave an $\widehat{O}(m2^{O(k^2)})$ algorithm (we use $\widehat{O}(\cdot)$ to suppress an $n^{o(1)}$ factor), which is nearly-linear for all $k \in o(\sqrt{\log n})$. As of yet, there is no linear-time deterministic algorithm for testing whether $\kappa(G) > 3$. 

As with vertex connectivity, algorithms for minimum edge cuts have a long history. The earliest known algorithm for deciding whether $\lambda(G) \leq \lambda$ ran in $O(\lambda nm)$ time \cite{gomory1961multi, ford1956maximal}. The running time was subsequently improved to $O(\lambda n^2)$ for simple unweighted graphs \cite{podderyugin1973algorithm}, and $O(nm + n^2\log n)$ for weighted graphs \cite{nagamochi1992computing, stoer1997simple, frank1994edge}. In 1991, Gabow improved the running times to $O(m + \lambda^2n\log(n/\lambda))$ for simple unweighted graphs and $O(m+\lambda^2n\log n)$ for graphs with parallel edges \cite{gabow1991matroid}. These time bounds remained the fastest for many years until a recent breakthrough by Kawarabayashi and Thorup who gave an $O(m\log^{12}n)$ time algorithm for simple unweighted graphs \cite{Kawarabayashi}. This was subsequently improved by Henzinger, Rao and Wang to $O(m\log^2n\log \log^2n)$ for simple unweighted graphs \cite{MonikaHenzinger}, and is the state-of-art time complexity for edge connectivity. Sarunarak simplified this algorithm at the cost of a slower $\widehat{O}(m)$ running time \cite{saranurak} for simple unweighted graphs, while Li achieved the same $\widehat{O}(m)$ running time for weighted graphs \cite{JasonLi}.

\subsection{Planar Graphs and Beyond}

Algorithms for computing minimum vertex and edge cuts for planar graphs are significantly faster than for general graphs. Eppstein \cite{eppstein2002subgraph} showed that vertex and edge connectivity of a simple unweighted planar graph can be computed in $O(n)$ time. For planar graphs with at most $t$ parallel (unweighted) edges connecting any pair of vertices, Eppstein's algorithm runs in time $2^{O(t \log t)}n$. 
For weighted planar graphs, an $O(n\log^2n)$ time algorithm for computing minimum edge cuts was given by Chalermsook et al. \cite{chalermsook2004deterministic}, and later improved by Italiano et al. to $O(n\log n \log\log n)$ \cite{italiano2011improved}. The state-of-art algorithm is by \L\k{a}cki and Sankowski which runs in $O(n \log\log n)$ time \cite{lkacki2011min}. In \cite{biedl2022computing}, Biedl and Murali extended Eppstein's method for computing vertex connectivity of planar graphs to 1-plane graphs without $\times$-crossings. These are graphs which can be drawn on the plane such that each edge is crossed at most once, and for any pair of edges $\{e,e'\}$ that cross each other, there is a path of length 3 that begins and ends with $e$ and $e'$ (crossings not satisfying this property are called $\times$-crossings). 

\subsection{Our results.}

A crucial property that enabled the design of linear-time connectivity testing algorithms for planar graphs \cite{eppstein2002subgraph} and 1-plane graphs without $\X$-crossings \cite{biedl2022computing} is that the vertices of a minimum vertex cut (and also the edges of a minimum edge cut for planar graphs, in \cite{eppstein2002subgraph})
were shown to be close to each other in the embedding. Put differently, any two vertices/edges of a minimum vertex/edge cut are separated by a small number of faces in the embedding. Naturally, one is interested to know what minimum condition an embedded graph must satisfy to enable this property about minimum vertex and edge cuts. Any such condition certainly includes the whole class of planar graphs, but not all of 1-plane graphs (see \cite{biedl2022computing} for a restrictive example). Preferably, one must also be able to describe it easily and test it in polynomial time.   

In this paper, we attempt to formulate an answer through a parameter called ribbon radius. Let $G$ be a graph embedded on the plane with crossings\footnote{In some contexts, the term `embedded' may refer to graphs drawn on surfaces without crossings. However, our use of the term `embedded' refers to any drawing of a graph on the plane, including drawings that feature crossings.}. Loosely speaking, the \textit{ribbon radius} of a graph is the radius of the smallest ball which, when centered at any crossing point, contains a \textit{ribbon} at the crossing---that is, a path that starts and ends with the pair of edges involved in the crossing. Given a drawing of a graph $G$, let $G^\times$ be the \textit{planarization} of $G$ obtained by inserting a dummy vertex at each crossing point, and replacing the crossing pair of edges with four edges incident to the dummy vertex. Let $\Lambda(G)$ be the graph obtained from $G^\times$ by inserting a face-vertex inside each face of $G^\times$ and making it adjacent to vertices on the face-boundary. Our main structural result is the following: If a graph $G$ has ribbon radius $\mu := \mu(G)$, then all vertices of a minimum vertex cut and edges of a minimum edge cut lie in a subgraph of $\Lambda(G)$ with diameter $O(\mu\kappa)$ and $O(\mu\lambda)$ respectively. This structure helps us design an algorithm for testing connectivity, for which we make use of the framework developed in \cite{biedl2022computing}. The idea is to construct the auxiliary graph $\Lambda(G)$, decompose it into subgraphs of diameter $O(\mu\kappa)$ or $O(\mu\lambda)$, and then run a dynamic programming algorithm on the tree decomposition of these subgraph. From the algorithmic front, our main result is the following: Given an embedded graph $G$ (in the form of its planarization $G^\times$) and its ribbon radius $\mu(G)$, we can compute $\kappa(G)$ and $\lambda(G)$ in time $2^{O(\mu\kappa)}|V(G^\times)|$ and $2^{O(\mu\lambda)}|V(G^\times)|$ respectively. Several well-known graph classes such as graphs with bounded crossing number, $d$-map graphs, $d$-framed graphs, optimal 2-planar and optimal 3-planar graphs etc. have bounded ribbon radius. This gives us an $O(n)$-time algorithm for testing vertex and edge connectivity for these graph classes. See \Cref{table: results} for a summary of our results.

\subparagraph{Organization of the Paper} In \Cref{section: preliminaries}, we give all preliminary definitions related to graphs and graph drawings. In addition, we discuss two concepts from  \cite{biedl2022computing}---radial planarization (the graph $\Lambda(G)$) and co-separating triples---which are important building blocks for this paper. In \Cref{section: ribbon radius}, we define ribbon radius formally, and give a polynomial-time algorithm for computing it. Following this, in \Cref{sec:applications}, we show that many classes of near-planar graphs have small ribbon radius. Our main structural result appears in \Cref{sec: co-separating triples}, where we show that if a graph $G$ has a small ribbon radius, then the vertices of any minimal vertex cut are inside a small-diameter subgraph of $\Lambda(G)$. Building on this, \Cref{sec: finding min vertex and edge cut} presents an algorithm---based on the framework of \cite{biedl2022computing}---for computing vertex and edge connectivity of embedded graphs. The implications of this for near-planar graphs are discussed in \Cref{sec: algorithmic applications}. In \Cref{app :sec: large ribbon radius}, we discuss the limitations of our method by showing how one can construct $k$-plane graphs with large ribbon radius. We conclude the paper in \Cref{sec: outlook}.

\begin{table}[ht!]
\centering
\renewcommand{\arraystretch}{1.3}
\setlength{\tabcolsep}{3pt}
\begin{tabular}{
    | >{\raggedright\arraybackslash}m{0.36\linewidth}
    | >{\centering\arraybackslash}m{0.12\linewidth}
    | >{\centering\arraybackslash}m{0.12\linewidth}
    | >{\centering\arraybackslash}m{0.14\linewidth}
    | >{\raggedright\arraybackslash}m{0.16\linewidth} |
}
\hline
\textbf{Graph Class} &
\textbf{$\kappa(G)$ or $\lambda(G)$} &
\textbf{$\mu(G)$} &
\textbf{Runtime} &
\textbf{Reference} \\
\hline
$k$-plane graphs without $\times$-crossings &
$O(\sqrt{k})$ & $O(k)$ &
$2^{O(k^{1.5})}n$ &
\Cref{cor: main result k-plane,claim: k-plane no x crossings} \\
\hline
$k$-plane graphs with at most $\gamma$ $\times$-crossings &
$O(\sqrt{k})$ & $O(\gamma + k)$ &
$2^{O(\sqrt{k}(\gamma + k))}n$ &
\Cref{cor: main result k-plane,cor: bounded x crossings ribbon radius} \\
\hline
$k$-plane graphs with ribbon-length $\leq \gamma$ at each crossing &
$O(\sqrt{k})$ & $O(\gamma k)$ &
$2^{O(\gamma k^{1.5})}n$ &
\Cref{cor: main result k-plane} \\
\hline
Graphs with crossings only in faces of $\mathrm{sk}(G)$ whose boundary is a cycle of length at most $d$. This includes $d$-framed graphs and $d$-map graphs. &
$O(d)$ & $O(d^2)$ &
$2^{O(d^3)}n$ &
\Cref{cor: dframed and dmap} \\
\hline
Graphs with at most $q$ crossings &
$O(\sqrt{q})$ & $O(q)$ &
$2^{O(q^{1.5})}n$ &
\Cref{cor: q crossings per face} \\
\hline
Graphs with at most $q$ crossings in each face of $\mathrm{sk}(G)$ &
$O(\sqrt{q})$ & $O(q)$ &
$2^{O(q^{1.5})}n$ &
\Cref{cor: q crossings per face} \\
\hline
$k$-plane graphs with $\mu(G) \in O(1)$ and $k \in O(1)$. This includes optimal $k$-plane graphs for $k \in \{1,2,3\}$. &
$O(1)$ & $O(1)$ &
$O(n)$ &
\Cref{cor: kplane constant mu} \\
\hline
Any graph with $\kappa(G) \cdot \mu(G)$ or $\lambda(G) \cdot \mu(G)$ in $o(\log n)$ &
& &
$\widehat{O}(|V(G^\times)|)$ &
\Cref{theorem: vertex connectivity} \\
\hline
Graphs with $o(\log n)$ crossings &
$O(1)$ & $o(\log n)$ &
$\widehat{O}(n)$ &
\Cref{prop: crossing lemma} \\
\hline
\end{tabular}
\vspace{1em}
\caption{A table showing a summary of results for various classes of near-planar graphs. (We assume intersection-simple drawings for many of these graph classes.) We use $\widehat{O}(f(n))$ to denote $O(f(n)\cdot n^{o(1)})$.}
\label{table: results}
\end{table}


\section{Preliminaries and Background}\label{section: preliminaries}

Let $G = (V,E)$ be a graph. A \textit{$(u,v)$-path} in $G$ is a path $P$ with end vertices $u$ and $v$. For any two vertices $u', v' \in V(P)$, we use $P(u'\dots v')$ to denote the sub-path of $P$ with end vertices $u'$ and $v'$. The length of $P$, denoted by $|P|$, is the number of edges on $P$. Graph $G$ is said to be \textit{connected} if, for every pair of vertices $u,v \in V(G)$, there is a $(u,v)$-path in $G$; otherwise $G$ is said to be \textit{disconnected}. A set of vertices $S \subseteq V(G)$ is a \textit{vertex cut} if $G \setminus S$ is disconnected, or is an isolated vertex. A vertex cut of minimum size is called a \textit{minimum vertex cut} of $G$; its size is the \textit{vertex connectivity} of $G$, denoted by $\kappa(G)$. Likewise, a set of edges $T$ is called an \textit{edge cut} of $G$ if $G \setminus T$ is disconnected. A \textit{minimum edge cut} of a graph $G$ without edge-weights is an edge cut of minimum size; the size is called the edge connectivity of $G$, and is denoted by $\lambda(G)$. All graphs in this paper are edge-unweighted, but need not be simple---they may have loops or parallel edges. 
For all graphs, the minimum degree $\delta(G)$ is an upper bound for both $\kappa(G)$ and $\lambda(G)$; in fact $\kappa(G) \leq \lambda(G) \leq \delta(G)$. 

For a vertex cut $S$ or an edge cut $T$, we call the connected components of $G \setminus S$ and $G \setminus T$ the \textit{flaps} of $S$ and $T$, respectively. 
A set of vertices $S$ \textit{separates} two non-empty sets $A$ and $B$ if no flap of $G\setminus S$ contains vertices of both $A$ and $B$. A set $S$ is a \textit{minimal vertex cut} if no subset of $S$ is a vertex cut. Likewise, a set $T$ is a \textit{minimal edge cut} if no subset of $T$ is an edge cut. Set $S$ is a minimal vertex cut if and only if each vertex of $S$ has a neighbour in every flap of $S$. If $T$ is a minimal edge cut of $G$, then there are exactly two flaps, and every edge of $T$ connects a vertex of one flap with a vertex of the other flap. For any set $Q \subseteq V(G)$, the notation $N_G[Q]$ denotes the \textit{closed neighbourhood} of $Q$ in the graph $G$, which is the set of all vertices that are either in $Q$ or adjacent to some vertex of $Q$. 


\subsection{Graph Drawings}\label{subsection: embedded graphs}

(We follow the conventions of \cite{schaefer2012graph} to define a graph drawing.) A \textit{drawing} of a graph $G = (V,E)$ is a mapping of $V(G)$ to distinct points on the plane, and each edge $e \in E(G)$ is a homeomorphic mapping of the interval $[0,1]$ on the plane, where $e(0)$ and $e(1)$ are the endpoints of the edge and $e(0,1)$ does not contain any vertices. A \emph{crossing point} of two edges $\{e_1,e_2\}$ is an intersection point $c = e_1(s) = e_2(t)$ that is not a vertex; i.e. $s,t\in (0,1)$. We require that no other edge passes through $c$, and that $e_1$ and $e_2$ cross transversally at $c$ (and not touch each other tangentially). A drawing of a graph is \textit{intersection-simple} \cite{schaefer2012graph} if any two edge-curves intersect at most once, either at a common incident vertex, or at a crossing point.  
We do not assume our graph drawings to be intersection-simple, unless stated otherwise. The \textit{crossing number} of a graph $G$, denoted by $\mathrm{cr}(G)$, is the minimum number of crossing points over all drawings of $G$.

\subparagraph{Crossings} If $e_1$ and $e_2$ are a crossing pair of edges, then the \emph{endpoints} of the crossing are the endpoints of $e_1$ and $e_2$. Two endpoints of a crossing are said to be \textit{consecutive} if one belongs to $e_1$ and the other to $e_2$; else they are said to be  \textit{opposite}. Since we allow drawings that are not intersection-simple, it is possible for two consecutive endpoints to be the same vertex. However, this will not cause any issues, as the challenging cases---as we will see---occur when consecutive endpoints are, in some sense, far apart. A drawing $D$ of a graph $G$ is \textit{planar} if it has no crossing points. The maximal connected regions of $\mathbb{R}^2 \setminus D$ are called \textit{faces} of the planar drawing. The faces can be described abstractly via a \emph{planar rotation system}, which specifies a clockwise order of edges incident with each vertex. This defines the \emph{face boundaries}, which are closed walks where each successive edge comes after the previous edge in the rotation system at the common endpoint. The \textit{skeleton} of a graph $G$ is the plane subgraph $\mathrm{sk}(G)$ that has the same vertex set as $G$, but includes only the uncrossed edges of $G$. The skeleton may or may not be connected; it may even be a graph without any edges. Every crossed edge of $G$ and every crossing point lies within some face of $\mathrm{sk}(G)$; we loosely say that the face \textit{contains} the edge or crossing point.

\subparagraph{Planarization} 
The \textit{planarization} of a drawing of a graph $G$ is the plane graph $G^\times$ obtained by inserting a \textit{dummy vertex} at each crossing point and replacing the crossing pair of edges with four edges incident to the dummy vertex. In this paper, we will use the terms dummy vertex and crossing point synonymously.  An \textit{embedding} of a graph $G$ is any drawing of $G$ expressed via a planar rotation system of $G^\times$. For any subgraph $H \subseteq G$, we use $H^\times$ to denote the sub-drawing induced by $H$ in $G^\times$. For instance, if $e = (u,v)$ is an edge of $G$, then $e^\times$ is the $(u,v)$-path in $G^\times$ whose internal vertices (if any) are the crossing points on $e$. The edges of $e^\times$ are called the \emph{edge-segments} of $e$, while $e$ is the \emph{parent-edge} for its edge-segments. For each dummy vertex $d$ on $e$, the sub-paths $e^\times(d\dots u)$ and $e^\times(d \dots v)$ are called \textit{part-edges}.

\subsection{Radial Planarization and Co-Separating Triple}

The concepts of radial planarization and co-separating triple were introduced in \cite{biedl2022computing} by an abstraction of the methods used for testing vertex connectivity of planar graphs \cite{eppstein2002subgraph}. These concepts are of vital importance to the present paper, so we re-state these definitions here.

\begin{definition}[Radial Planarization]\label{def: radial planarisation}
    For any drawing of a graph $G$ on the plane, the radial planarization $\Lambda(G)$ is the graph obtained by inserting a face-vertex inside each face of $G^\times$ and making it adjacent to all vertices of $G^\times$ that bound the face.
\end{definition}

(The definition in \cite{biedl2022computing} is slightly more stringent as every incidence of a face with a vertex adds an edge to the corresponding face-vertex, potentially creating parallel edges in $R(G)$. For our purposes, the above definition suffices.) The subgraph of $\Lambda(G)$ induced by all edges incident to face-vertices is called the \textit{radial graph} of $G$, denoted by $R(G)$. Note that $R(G)$ is a simple bipartite graph with bipartition $(A,B)$, where $A = V(G^\times)$ and $B$ is the set of all face-vertices. The \textit{face-distance} between any two vertices $u,v \in \Lambda(G)$, denoted by $d_F(u,v)$, is the number of face-vertices on a shortest path between $u$ and $v$ in $R(G)$. For any subset of vertices $S$ in $\Lambda(G)$, we write $d_F(u,S) = \min_{v \in S}d_F(u,v)$.

\begin{definition}[Co-separating Triple and Nucleus]\label{def: co-sep}
Let $G$ be an embedded graph and $\Lambda$ be a graph where $V(\Lambda) \subseteq V(\Lambda(G))$. A partition of the vertices of $\Lambda$ into three sets $(A,X,B)$ is called a \emph{co-separating triple of $(\Lambda, G)$} if it satisfies the following conditions: 
\begin{enumerate}[(C1)]
    \item Each of $A,X,B$ contains at least one vertex of $G$. \label{C1}
    
    \item There is no edge of $\Lambda$ with one endpoint in $A$ and the other endpoint in $B$.\label{C2}
    
    \item For every edge $e \in G \setminus (X \cap V(G))$, all vertices of $e^\times \cap \Lambda$ belong entirely to $A \cup X$, or belong entirely to $B \cup X$. \label{C3}
\end{enumerate}    
    The \emph{nucleus} of $(A,X,B)$ is the set $N_G[X \cap V(G)]$. The maximum distance in $\Lambda$ between any two vertices of the nucleus is called the \emph{nuclear-diameter}.
\end{definition}


\section{Defining and Computing Ribbon Radius}
\label{section: ribbon radius}

Our algorithm to compute vertex and edge connectivity uses a parameter that we call ribbon radius. For any crossing pair of edges $\{e_1,e_2\}$, a \emph{ribbon} is a path in $G$ that begins with $e_1$ and ends with $e_2$. Informally, the ribbon radius of a graph is the radius of a smallest ball in $\Lambda(G)$ that, when centered at any crossing point, contains a ribbon at that crossing within the ball. To formalize this, we need some definitions first. 

For any set $S \subseteq V(\Lambda(G))$ and integer $r \geq 0$, define $\Lambda(S,r)$ 
as the subgraph of $\Lambda(G)$ induced by all vertices $v$ such that $d_F(v,S) \leq r$. If $S = \{c\}$ is a singleton, we simply write $\Lambda(c,r)$ instead of $\Lambda(\{c\},r)$. Let $\mathcal{B}(S,r)$ 
be the restriction of $\Lambda(S,r)$ to $G^\times$ obtained by deleting all face-vertices. 
We say that a vertex/edge of $G^\times$ is on the \emph{boundary} of $\mathcal{B}(S,r)$ if it has an incident face-vertex of face-distance $r$ and an incident face-vertex of face-distance $r+1$.   (Note that face-distance of incident face-vertices can differ by at most one.)   Let $\mathcal{Z}(S,r)$ be the graph formed by the vertices and edges on the boundary of $\mathcal{B}(S,r)$. Being a subgraph of the planar graph $G^\times$, we have that $\mathcal{Z}(S,r)$ is also planar. Each face of $\mathcal{Z}(S, r)$ corresponds to a union of faces of $G^\times$, and can be assigned one of two colours depending on whether the corresponding faces in $G^\times$ have face-distance at most $r$ or at least $r+1$. Those with face-distance at least $r+1$ are referred to as \emph{holes} of $\Lambda(S, r)$. Since the two faces incident to each boundary edge have different colours, by walking along the boundary of a face of $\mathcal{Z}(S,r)$, we can possibly encounter vertices repeatedly, but edges at most once; we call such a closed walk a \emph{circuit}. Therefore, all faces of $\mathcal{Z}(s,r)$ are bounded by circuits. 
%

Let $v$ be a vertex of $G^\times$. If $v$ is a dummy vertex, then define $\mu(v)$ as the smallest integer such that $B(v, \mu(v))$ contains $Q_v \times$ as a subgraph, for some ribbon $Q_v$ at $v$.
Put differently, $\mathcal{B}(v,\mu(v))$ must contain the crossing point $v$, all four part-edges at $v$, and for some path in $G$ that connects two consecutive endpoints of the crossing at $v$, all edge-segments and dummy vertices of all edges on that path. If $v$ is a vertex of $G$, then define $\mu(v):=0$. The value $\mu(v)$ is called the \textit{ribbon radius at $v$}. The \textit{ribbon radius of $G$}, denoted by $\mu(G)$, is then defined as $\mu(G) := 1 + \max_{v \in V(G^\times)} \mu(v)$\footnote{The `1+' in the definition may seem unusual, but will be crucial later in \Cref{theorem: vertex connectivity}.}. In \Cref{obs: face distance between dummy-vertices of an edge}, we show that the face-distance between any two vertices on $e^\times$, for some $e \in E(G)$, is at most the ribbon-radius $\mu(G)$.

\begin{observation}\label{obs: face distance between dummy-vertices of an edge}
Let $e = (u,v)$ be an edge of $G$. Then $d_F(u,v) \leq \mu(G)$, and for any dummy vertex $w \in e^\times$, $\max\{d_F(u,w), d_F(v,w)\} \allowbreak \leq \mu(w)$.
\end{observation}

\begin{proof}
For any dummy vertex $w \in e^\times$, the graph $\mathcal{B}(w,\mu(w))$ contains all edges of $e^\times$. Hence, $\max\{d_F(u,w), d_F(v,w)\} \allowbreak  \leq \mu(w)$. We now show that $d_F(u,v) \leq \mu(G)$. If $e$ is an uncrossed edge, then $d_F(u,v)=1$, and by $\mu(G)\geq 1$, the result holds. If $e$ is crossed, then choose $w$ to be the first dummy vertex in the direction $u$ to $v$. As $d_F(u,w) = 1$ and $d_F(w,v) \leq \mu(w) \leq \mu(G)-1$, we have $d_F(u,v) \leq \mu(G)$.
\end{proof}

If $Q$ is a ribbon in a graph $G$, then $Q^\times$ is a walk in $G^\times$ with some properties---for example, the walk does not `turn' or `reverse' at dummy vertices. We find it useful to formalize these notions for arbitrary walks in $G^\times$. Consider a walk $W$ in $G^\times$ that contains a dummy vertex $d$ as an interior node; say the incident edge-segments at $d$ are $e_1$ and $e_2$.  We say that $W$ \emph{reverses at $d$} if $e_1=e_2$, \emph{goes straight at $d$} if $e_1\neq e_2$ but have the same parent-edge, and \emph{makes a turn} otherwise. We say that $W$ is a \textit{$G$-respecting} if it begins and ends at vertices of $G$ and goes straight at all crossing points (thereby tracing a walk in $G$).

Define a \textit{ribbon-loop} in $G^\times$ as a closed walk that: (a) begins and ends at a crossing point $v$; (b) its first and last edge-segments have different parent-edges; (c) it makes no turns, but possibly reverses. Notice that if $Q_v$ is a ribbon at $v$, then the sub-walk of $Q^\times_v$ between the two occurrences of $v$ satisfies all three conditions, and so is a ribbon-loop. In \Cref{claim: ribbon radius in terms of ribbon length}, we show a useful way to bound the ribbon radius at a crossing point using ribbon-loops.

\begin{observation}\label{claim: ribbon radius in terms of ribbon length}
Let $e_1 = (u,v)$ and $e_2 = (w,x)$ be a pair of edges crossing at point $c$. Let $Q_c$ be a ribbon at $c$ with $u$ and $w$ as its end vertices. Then $\mu(c) \leq \max \Big \{ \left|e_1^\times(c\dots u)\right|, \left |e_2^\times(c\dots w) \right|, \allowbreak 
 \frac{\left|E(Q_c^\times)\right|}{2} - 1 \Big \}$.
\end{observation}

\begin{proof}
To prove this, we use the fact that the face-distance between any two vertices of $G^\times$ is at most the length of a shortest path between them in $G^\times$ (since we can use face-vertices near every edge of $G^\times$). Any vertex $t \in Q_c^\times$ belongs to either $e_1^\times(c \dots u) \cup e_2^\times(c \dots w)$ or to the ribbon-loop $Q_c^\times \setminus \allowbreak E\left(e_1^\times(c \dots u) \cup e_2^\times(c \dots w)\right)$. In the former case, $d_F(c,t) \leq \max\{|e_1^\times(c\dots u)|, |e_2^\times(c \dots w)|\}$. In the latter case, $t$ is a part of a closed walk with at most $|E(Q_c^\times)|-2$ edges. Hence $d_F(c,t) \leq \frac{\left|E(Q_c^\times)\right|-2}{2} =  \frac{\left|E(Q_c^\times)\right|}{2} - 1$.     
\end{proof}

The definition of ribbon radius leads to a straight-forward polynomial-time algorithm for computing it. Given an embedded graph $G$ (in the form of a planar rotation system of $G^\times$), we can construct $\Lambda(G)$, and use breadth-first search (BFS) at each crossing point $v$ to determine $\mu(v)$. Once this is done, $\mu(G)$ is simply one plus the maximum value of $\mu(v)$ over all crossing points $c$. In \Cref{prop: compute ribbon radius}, we explain this simple algorithm in more detail, and optimize it to be a little more time-efficient.

\begin{theorem}\label{prop: compute ribbon radius}
Given any embedded graph $G$ with $n$ vertices and $q$ crossing points, one can compute $\mu(G)$ in time $O(q(n+q)\log \mu(G))$. 
\end{theorem}

\begin{proof}
We first compute the graphs $\Lambda(G)$ and $R(G)$;  this can be done in $O(n+q)$ time since we are given $G^\times$ and $|V(G^\times)| = n+q$. The main idea is now to do a galloping search for the correct value of $\max_{v \in V(G^\times)} \mu(v) = \mu(G)-1$. We try values of $\mu = 1,2,4,8,\dots$ until we succeed, say at value $\mu'$, and then do a binary search for the correct value in the interval between $\mu'/2$ and $\mu'$. Therefore, we try for $O(\log(\mu(G))$ values in total. For each of these values, we must test whether $\mu(v) \leq \mu$ for all the $q$ dummy-vertices $v$. If we show that one can test whether $\mu(v) \leq \mu$ in time $O(n+q)$, then this will imply that we spend $O(q(n+q)\log \mu(G))$ time in total, as required.

To test whether $\mu(v) \leq \mu$, perform a BFS starting at $v$ in $R(G)$, restricting the search only to vertices at distance within $2\mu$ from $v$; this determines exactly the vertices of $\Lambda(v,\mu)$ from which we can obtain $\mathcal{B}(v,\mu)$. Then, test whether all four part-edges at $v$ are in $\mathcal{B}(v,\mu)$; if not then $\mu(v)>\mu$ by \Cref{obs: face distance between dummy-vertices of an edge} and we can stop and return `no'. Otherwise, we modify $\mathcal{B}(v,\mu)$ into a subgraph $\mathcal{B}_G(v,\mu)$ of $G$: keep all vertices of $V(G) \cap V(\mathcal{B}(v,\mu))$, and insert all those edges $e \in E(G)$ for which $e^\times \subseteq \mathcal{B}(v,\mu)$. Then $\mathcal{B}(v,\mu)$ contains a ribbon if and only if two consecutive endpoints of the crossing at $v$ are in the same connected component of $\mathcal{B}_G(v,\mu)$; this can easily be tested in time $O(|\mathcal{B}_G(v,\mu)|) \subseteq O(|\mathcal{B}(v,\mu)|) \subseteq O(n+q)$.
\end{proof}


\section{Graphs with Small Ribbon Radius}
\label{sec:applications}

We now give examples of several classes of near-planar graphs that have small ribbon radius. `Near-planar graphs' is an informal term for graphs that are close to being planar; these have received a lot of attention in recent years (see \cite{didimo2019survey, hong2020beyond} for surveys in this area). Of special interest to us is the class of \textit{$k$-plane graphs}, which are graphs that are embedded on the plane with each edge crossed at most $k$ times.

\begin{figure}
    \centering
    \includegraphics[scale = 0.75]{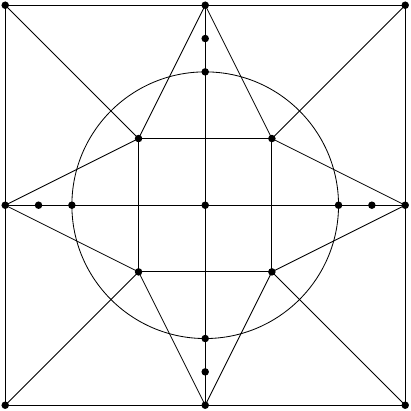}
    \caption{A graph drawing where all crossings are $\times$-crossings}
    \label{fig: xcrossings}
\end{figure}

\begin{proposition}\label{cor: small ribbon length}
If $G$ is a $k$-plane graph such that there exists a ribbon of length at most $\gamma$ at every crossing of $G$, then $\mu(G) \leq \gamma (k+1)/2$.
\end{proposition}

\begin{proof}
Let $e_1 = (u,v)$ and $e_2 = (w,x)$ be a pair of edges that cross at a point $c$. Let $Q_c$ be a ribbon at the crossing of length at most $\gamma$; up to renaming the endpoints of the crossing, assume that the end vertices of the ribbon are $u$ and $w$.  Since the graph is $k$-plane, every edge $e$ has at most $k$ crossings, so $e^\times$ has at most $k+1$ edge segments.  Therefore $|E(Q_c^\times)|\leq  \gamma(k+1)$, and  $|e_1^\times(c \dots u)| \leq k$ and $|e_2^\times(c \dots w)| \leq k$.  
By \Cref{claim: ribbon radius in terms of ribbon length}, we have $\mu(c) \leq \max\{|e_1^\times(c \dots u)|,|e_2^\times(c \dots v)|,|E(Q_c^\times)|/2-1\} \leq \max\{k, \gamma (k+1)/2 - 1\} = \gamma (k+1)/2 - 1$, since $\gamma \geq 3$. Therefore $\mu(G) \leq \gamma (k+1)/2$.
\end{proof}

An $\times$-crossing of a 1-plane graph, as defined in \cite{biedl2022computing}, is a crossing whose endpoints induce a matching. 
We can extend this definition to general embedded graphs without the requirement of 1-planarity: an \textit{$\times$-crossing} of any embedded graph is one whose endpoints induce a matching (see \Cref{fig: xcrossings} for an example). An embedded graph without $\times$-crossings has a ribbon of length at most 3 at every crossing. (The ribbon length is less than 3 if the endpoints of the two crossing edges are not disjoint.) Therefore, we get the following corollary of \Cref{cor: small ribbon length}. 

\begin{corollary}\label{claim: k-plane no x crossings}
If $G$ is a $k$-plane graph without $\times$-crossings, then $\mu(G) \leq 3(k+1)/2$. 
\end{corollary}

In light of \Cref{claim: k-plane no x crossings}, one may ask whether the ribbon radius of a graph with bounded number of $\times$-crossings is also bounded. We answer this more generally by showing that graphs where all but a few crossings have small ribbon radii have small value of $\mu(G)$.

\begin{theorem}\label{proposition: bounds on ribbon radius}
Let $G$ be a connected embedded graph, and $\alpha,\gamma$ be two constants such that there are at most $\gamma$ crossing points $c$ with $\mu(c)>\alpha$. Then $\mu(G) \leq \gamma + \alpha+1$. 
\end{theorem}

\begin{proof}
It suffices to show that $\mu(c)\leq \gamma+\alpha$ for every crossing point $c$. This holds trivially if $\mu(c)\leq \alpha$, so we may assume that $\mu(c) > \alpha$. For $r=0,1,2,\dots,\gamma$, consider the boundaries $\mathcal{Z}(c,r)$ of 
the graphs $\mathcal{B}(c,r)$. 
By definition, any vertex of $\mathcal{Z}(c,r)$ has incident face-vertices of $G^\times$ with face-distance $r$ and $r{+}1$, and no other face-distances are possible.
Therefore $\mathcal{Z}(c,0),\dots,\mathcal{Z}(c,\gamma)$ form a set of $\gamma{+}1$ vertex-disjoint graphs. As there are at most $\gamma-1$ crossing points other than $c$ whose ribbon radius is more than $\alpha$, there must exist some $1 \leq r_c\leq \gamma$ such that $\mathcal{Z}(c,r_c)$ contains none of them, i.e., $\mu(z)\leq \alpha$ for all $z\in \mathcal{Z}(c,r_c)$. Having fixed such a value $r_c$, we will show how to construct a ribbon at $c$ within $\mathcal{B}(c,r_c+\alpha)$. Let $\{e_1,e_2\}$ be the two edges crossing at $c$.

\begin{claim}\label{claim: planarized crossing within ball r_c + alpha}
$\{e_1^\times,e_2^\times\} \subseteq \mathcal{B}(c,r_c+\alpha)$.
\end{claim}

\begin{proof}
Assume for contradiction
that $e_1^\times$ is not a subgraph of $\mathcal{B}(c,r_c+\alpha)$. Since $r_c > 0$, point $c$ is within $\mathcal{B}(c,r_c)$, and the path $e_1^\times$ must leave $\mathcal{B}(c,r_c+\alpha)$ somewhere; hence there exists a vertex $z \in e_1^\times \cap \mathcal{Z}(c,r_c)$ and $z$ must be a dummy-vertex since it is in the interior of $e_1^\times$. By choice of $r_c$, we have $\mu(z)\leq \alpha$, so there exists a ribbon $Q_z$ at $z$ such that $Q_z^\times \subseteq \mathcal{B}(z,\alpha)$.  We have $e_1^\times \subseteq Q_z^\times$ since $e_1$ crosses some other edge $e'$ at $z$, and the ribbon $Q_z$ includes edge $e_1$ (Figure~\ref{fig:gamma_plus_alpha:ex}). So, $e_1^\times \subseteq \mathcal{B}(z,\alpha)\subseteq \mathcal{B}(c,r_c+\alpha)$.
\end{proof}

\begin{figure}
     \centering
     \begin{subfigure}[b]{0.3\textwidth}
         \centering
         \includegraphics[scale = 0.75,page=1]{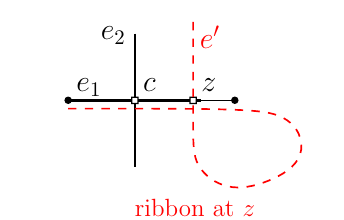}
         \caption{$e_1^\times$ is in $\mathcal{B}(c, r_c + \alpha)$}
         \label{fig:gamma_plus_alpha:ex}
     \end{subfigure}
     \hfill
     \begin{subfigure}[b]{0.3\textwidth}
         \centering
         \includegraphics[scale = 0.75,page=2]{Figures/gamma_plus_alpha.pdf}
         \caption{Replacing $W$ with $W'$.}
         \label{fig:gamma_plus_alpha:outside}
     \end{subfigure}
     \hfill
     \begin{subfigure}[b]{0.33\textwidth}
         \centering
         \includegraphics[scale = 0.75,page=3]{Figures/gamma_plus_alpha.pdf}
         \caption{Inserting a ribbon-loop $W_z$.}
         \label{fig:gamma_plus_alpha:turn}
     \end{subfigure}
\end{figure}

\begin{claim}\label{lemma: ribbon-loop}
    There is a ribbon-loop $Q$ at $c$ within $\mathcal{B}(c, r_c + \alpha)$.
\end{claim}

\begin{proof}
Let $W$ be any ribbon-loop at $c$ (this exists because $G$ is connected, and any ribbon defines a ribbon-loop). We will convert $W$ into a walk $W'$ at $c$ (not necessarily a ribbon-loop) that strictly stays within $\mathcal{B}(c,r_c)$. Since $r_c>0$, vertex $c$ is strictly inside one face $f_c$ of the boundary-graph $\mathcal{Z}(c,r_c)$. Assume that $W$ goes outside face $f_c$. Then there is circuit $Z \subseteq \mathcal{Z}(c,r_c)$ such that $c$ is inside and some other vertex of $W$ is outside $Z$. But to return to $c$, walk $W$ must return to $Z$ at some later point. We can therefore re-route $W$ by walking along circuit $Z$ (Figure~\ref{fig:gamma_plus_alpha:outside}). By repeating at all instances where $W$ goes outside $f_c$, we get the walk $W' \subseteq \mathcal{B}(c,r_c)$. 

Walk $W'$ begins and ends with the same edge-segments as $W$, but it may have turns at dummy-vertices $z \in \mathcal{Z}(c,r_c)$ due to the re-routing of $W$. However, since $\mu(z)\leq \alpha$, there is a ribbon-loop $W_z$ at $z$ that stays within $\mathcal{B}(z,\alpha)\subseteq \mathcal{B}(c,r_c+\alpha)$. By inserting $W_z$ at $z$, we can remove the turn at $z$ without adding any new turns (\Cref{fig:gamma_plus_alpha:turn}). By repeating at all such instances, we obtain a ribbon-loop $Q \subseteq \mathcal{B}(c,r_c+\alpha)$.
\end{proof}

From \Cref{claim: planarized crossing within ball r_c + alpha,lemma: ribbon-loop}, we can obtain a ribbon at $c$ as follows. If the ribbon-loop $Q$ (from \Cref{lemma: ribbon-loop}) reverses at some dummy vertex, then we simply omit the twice-visited edge-segment. Since $Q$ now does not turn and goes straight at dummy vertices, it must begin and end with edge-segments of $e_1$ and $e_2$. To $Q$, we attach the part-edges of $e_1$ and $e_2$ that are possibly not visited (these are within $\mathcal{B}(c,r_c+\alpha)$ by \Cref{claim: planarized crossing within ball r_c + alpha}). This gives a $G$-respecting walk within $\mathcal{B}(c,r_c+\alpha)$ from which a ribbon can be obtained.
\end{proof}

The corollaries below follow easily from \Cref{proposition: bounds on ribbon radius,claim: k-plane no x crossings}.

\begin{corollary}\label{cor: bounded x crossings ribbon radius}
If a $k$-plane graph $G$ has at most $\gamma$ $\X$-crossings, then $\mu(G) \leq \gamma + \frac{3(k+1)}{2}$. 
\end{corollary}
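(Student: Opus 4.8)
The plan is to obtain this directly from Lemma~\ref{proposition: bounds on ribbon radius}, choosing its parameter $\alpha$ so that the "special" crossings of that lemma are exactly the potentially-problematic $\X$-crossings. Concretely, I would set $\alpha := \floor{\tfrac{3k+1}{2}}$. By Claim~\ref{claim: k-plane no x crossings}, every crossing point $c$ of $G$ that is \emph{not} a $\X$-crossing satisfies $\mu(c) \le \alpha$. Hence the only crossing points $c$ with $\mu(c) > \alpha$ are $\X$-crossings (and these indeed may have large $\mu(c)$, since the two crossing edges need not be joined by a short ribbon, which is why we cannot bound them by a constant and must instead count them). By hypothesis there are at most $\gamma$ such crossings, so the pair $(\alpha,\gamma)$ meets the hypothesis of Lemma~\ref{proposition: bounds on ribbon radius} (recall that $G$ is connected by the standing assumption of the paper).

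Feeding this into Lemma~\ref{proposition: bounds on ribbon radius} yields $\mu(G) \le \gamma + \alpha + 1 = \gamma + \floor{\tfrac{3k+1}{2}} + 1$. The remaining step is the routine floor arithmetic $\floor{\tfrac{3k+1}{2}} + 1 = \floor{\tfrac{3k+3}{2}} \le \floor{\tfrac{3k+5}{2}}$, valid for both parities of $k$, which delivers the stated bound $\mu(G) \le \gamma + \floor{\tfrac{3k+5}{2}}$. In fact the argument gives the marginally sharper $\mu(G) \le \gamma + \floor{\tfrac{3k+3}{2}}$; I would mention this but keep the looser constant in the statement for uniformity with the other corollaries.

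I do not expect any real obstacle here: all the substantive work is already contained in Lemma~\ref{proposition: bounds on ribbon radius} and Claim~\ref{claim: k-plane no x crossings}, and what remains is essentially a substitution. The only points that need a moment's care are (i) verifying that the connectivity hypothesis required by Lemma~\ref{proposition: bounds on ribbon radius} is in force, and (ii) carrying out the floor computation correctly for even and odd $k$ — neither of which is more than bookkeeping. So the "hard part" is really just making sure the parameter choice $\alpha = \floor{\tfrac{3k+1}{2}}$ lines up cleanly with the statement of Claim~\ref{claim: k-plane no x crossings}.
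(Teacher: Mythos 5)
Your proof is correct and is precisely the intended derivation; the paper simply asserts that the corollary "follows easily" from Lemma~\ref{proposition: bounds on ribbon radius} and Claim~\ref{claim: k-plane no x crossings}, and your choice $\alpha = \floor{\tfrac{3k+1}{2}}$ together with the observation that all non-$\X$-crossings then fall below $\alpha$ is exactly the substitution the authors have in mind. Your remark that the argument actually yields the marginally sharper bound $\gamma + \floor{\tfrac{3k+3}{2}}$ is also correct — the paper's $\floor{\tfrac{3k+5}{2}}$ is loose by one — and the floor arithmetic checks out for both parities of $k$.
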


\begin{corollary}\label{cor: bounded crossings}
If a graph $G$ can be embedded with at most $q$ crossing points, then  $\mu(G) \leq q + 1$.
\end{corollary}

In fact, one can generalize \Cref{cor: bounded crossings} to all graphs that contain few crossings in each face of its skeleton $\mathrm{sk}(G)$ (the plane subgraph of $G$ induced by the set of all uncrossed edges).

\begin{proposition}\label{claim: skeleton}
   If $G$ is an embedded graph such that each face of the skeleton $\mathrm{sk}(G)$ contains at most $q$ crossing points, then $\mu(G) \leq q + 1$.  
\end{proposition}

\begin{proof}
Let $F$ be a face of $\mathrm{sk}(G)$, and consider a crossing point $c$ inside $F$. As in the proof of \Cref{proposition: bounds on ribbon radius}, consider the boundaries $\mathcal{Z}(c,r)$ of balls $\mathcal{B}(c,r)$ for $r=0,1,2,\dots q$, and let $1 \leq r_c \leq q$ be the smallest integer such that $\mathcal{Z}(c,r)$ contains no dummy vertex of a crossing inside $F$. Then all edges on the face-boundary of $F$ must belong to $\mathcal{B}(c,r_c)$. Since $G$ is connected, the graph induced by edges on the boundary of $F$ and the crossed edges inside $F$ must be connected. Therefore, one can find a ribbon at $c$ within $\mathcal{B}(c,r_c)$. As $r_c \leq q$, we have $\mu(c) \leq q$. This implies that $\mu(G) \leq q+1$.
\end{proof}

An embedded graph $G$ is a \textit{$d$-framed graph} if $\mathrm{sk}(G)$ is simple, biconnected, spans all vertices  and all its faces have \emph{degree} at most $d$ (i.e., the face boundary has at most $d$ edges) \cite{bekos_dframe}. Examples of graphs that are $d$-framed include \textit{optimal $k$-plane graphs} for $k \in \{1,2,3\}$ \cite{bekos_optimal, schumacher1986struktur}, which are graphs with the maximum possible number of edges over all $k$-plane graphs. Another class of graphs related to $d$-framed graphs is \textit{$d$-map graphs}. A \textit{map graph} is the intersection graph of a map of \textit{nations}, where a nation is a region homeomorphic to a closed-disk, and the interiors of any two nations are disjoint \cite{chen2002map}. A \textit{$d$-map graph} has a map representation where at most $d$ nations intersect at a point. One can show that a graph $G$ is $d$-map if and only if it has an embedding where the set of all crossed edges occur as cliques inside faces of $\mathrm{sk}(G)$ with degree at most $d$ \cite{chen2002map}.  Since such a clique has an intersection-simple drawing with $O(d^4)$ crossings inside the face, it follows immediately from \Cref{claim:dframed} that $\mu(G)\in O(d^4)$ for $d$-map graphs. But, we can prove a better bound of $\mu(G)\in O(d^2)$, even
for a slightly more general class of graphs.

\begin{proposition}
\label{claim:dframed}
Let $G$ be an embedded graph with an intersection-simple drawing such that all crossed edges of $G$ are inside faces of $\mathrm{sk}(G)$ whose boundaries are  simple cycles of length at most $d$. Then $\mu(G) \leq \frac{d(2d+1)}{8}$.
\end{proposition}

\begin{proof}
Consider a crossed edge $e$ inside a face $F$. The number of edges that can cross $e$ is at most $(x-1)(|F|-x-1)$, where $|F|$ is the number of edges on $F$, and $x$ is the length of a shortest path along $F$ between the ends of $e$; this is maximized when $x= \left \lfloor |F|/2 \right \rfloor$. Since $|F| \leq d$, every edge is crossed by at most $\lceil \tfrac{d-2}{2}\rceil \lfloor \tfrac{d-2}{2} \rfloor$ other edges. Therefore, a $d$-map graph $G$ is $k$-planar for
$k \leq \tfrac{1}{4}(d{-}2)(d{-}1) = \tfrac{1}{4}(d^2-3d+2) \leq \tfrac{1}{4}(d^2-10)$. Since any crossing is inside a face of degree at most $d$, there is a path of length at most $d/4$ within the face boundary that connects some consecutive pair of endpoints of the crossing. Using \Cref{claim: ribbon radius in terms of ribbon length} (and the fact that the path of length $d/4$ consists of uncrossed edges), for any crossing point $c$, we have $\mu(c) \leq \max \left \{\frac{d^2-10}{4}, \frac{d^2-10}{4} + \frac{d}{8} -1 \right \} \leq \frac{d(2d+1)}{8} - 1$. Hence, $\mu(G) \leq \frac{d(2d+1)}{8}$. 
\end{proof}

In summary, the following classes of embedded graphs have small ribbon radius: $k$-plane graphs with a constant number of $\times$-crossings (\Cref{cor: bounded x crossings ribbon radius}), $k$-plane graphs with bounded-length ribbons at all crossings (\Cref{cor: small ribbon length}), graphs with a constant number of crossings (\Cref{cor: bounded crossings}), $d$-map graphs, $d$-framed graphs (\Cref{claim:dframed}), and more generally, graphs with a constant number of crossings in each face of its skeleton (\Cref{claim: skeleton}). 


\section{Ribbon Radius and Co-Separating Triples}\label{sec: co-separating triples}


In \Cref{theorem: vertex connectivity}, we present the main theoretical result of this paper. Briefly, the theorem shows that for any minimal vertex cut $S$ of a graph $G$ with $\mu := \mu(G)$, the graph $\Lambda(S, \mu)$ is connected and has diameter $O(\mu |S|)$. Moreover, $\Lambda(S, \mu)$ induces a co-separating triple $(A, X, B)$ of $(\Lambda(G), G)$, whose nucleus lies within $\Lambda(S, \mu)$, and the vertices of $\Lambda(G)$ contained within any hole of $\Lambda(S,\mu)$ only belong to one of the sets $A$ and $B$. (Recall from \Cref{section: ribbon radius} that a hole of $\Lambda(S,\mu)$ is a maximal region that does not contain any face of $\Lambda(S,\mu)$.) Since the nucleus of $(A,X,B)$ lies within $\Lambda(S,\mu)$, the nuclear-diameter is also in $O(\mu |S|)$; this will be crucial later when we design an algorithm to search for a minimum vertex cut. \Cref{fig: coseparating triple} provides an illustration for \Cref{theorem: vertex connectivity}.

\begin{figure}
    \centering
    \includegraphics[scale = 0.75]{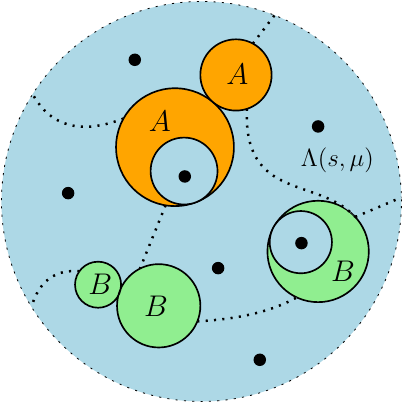}
    \caption{A figure for \Cref{theorem: vertex connectivity} depicting the structure of a co-separating triple of $(\Lambda(G),G)$. The graph is partitioned into different regions. The regions containing vertices of $S$ (shown with black dots) represent $\Lambda(s,\mu)$ for $s \in S$. The union of all such regions forms the graph $\Lambda(S,\mu)$. The remaining regions are holes of $\Lambda(S,\mu)$. These are bounded by solid edges representing the boundary $\mathcal{Z}(S,\mu)$. All vertices of $\Lambda(G)$ inside a hole belong entirely to $A$, or to $B$.}
    \label{fig: coseparating triple}
\end{figure}

\begin{theorem}\label{theorem: vertex connectivity}
Let $G$ be a connected embedded graph with a minimal vertex cut $S$. Let $\mu := \mu(G)$ be the ribbon radius of $G$. Then there exists a co-separating triple $(A,X,B)$ of $(\Lambda(G),G)$ where:
\begin{enumerate}
    \item For all edges $e \in E(G)$, all vertices of $e^\times$ belong entirely to $A \cup X$, or to $B \cup X$;\label{item: edges}
    \item $X \cap V(G) = S$ and $N_G[S] \subseteq \mathcal{B}(S,\mu)$;\label{item: x} 
    \item $X \subseteq \Lambda(S,\mu) \setminus \mathcal{Z}(S,\mu)$; \label{item: boundary}
    \item For any hole $F$ of $\Lambda(S,\mu)$, the vertices of $\Lambda(G)$ contained by $F$ either all belong to $A$, or they all belong to $B$; and \label{item: holes}  
   \item $\Lambda(S,\mu)$ is a connected graph of diameter $|S|(4\mu+1)$. \label{item: diameter}
\end{enumerate}
\end{theorem}

We devote the rest of this section to proving the theorem. We begin by defining the vertex sets of the co-separating triple through a suitable labelling of the vertices. Then, we show that this partition indeed forms a co-separating triple with all the necessary properties. There are three types of vertices in $\Lambda(G)$---vertices of $G$, dummy-vertices and face-vertices. Vertices of $G$ are labelled as follows: Fix an arbitrary flap $\phi_1$ of $S$. Each vertex in $S$ is labelled $X$, each vertex in $\phi_1$ is labelled $A$, and each vertex of $G$ in any flap other than $\phi_1$ is labelled $B$. Next, dummy-vertices are labelled as follows: For each dummy vertex $c$, if all the endpoints of the crossing at $c$ have the same label, then assign $c$ that label. In all other cases, label $c$ with $X$. Lastly, face-vertices are labelled as follows: For each face-vertex $f$, if all vertices of $V(G^\times)$ on the corresponding face boundary have the same label, then give $f$ that label. In all other cases, label it $X$.

To prove that $(A,X,B)$ is a co-separating triple, we verify all the three conditions in \Cref{def: co-sep}. Condition (C\ref{C1}) holds since $S$ is non-empty and defines at least two non-empty flaps. Condition (C\ref{C3}) is also easy to show. For any edge $e\in E(G \setminus (X \cap V(G))$, both its endpoints belong to the same flap and are labelled the same, say $A$, hence all dummy vertices on $e^\times$ can only be labelled $A$ or $X$. Condition~(\ref{C3}) implies Condition~(C\ref{C2}) for all edges of $G^\times$. Any other edge of $\Lambda(G)$ is incident with a face-vertex, and this can be labelled with $A$ or $B$ only if the other endpoint is labelled the same. So Condition~(C\ref{C2}) also holds and $(A,X,B)$ is a co-separating triple.

\begin{proof}[Proof of \Cref{theorem: vertex connectivity}(\ref{item: edges})]
Since $(A,X,B)$ is a co-separating triple of $(\Lambda(G),G)$, the statement already holds true for all edges $e \in E(G \setminus (X \cap V(G)))$. If $e$ is an edge incident with a vertex of $X \cap V(G)$, then all vertices on $e^\times$ are labelled $X$.
\end{proof}

\begin{proof}[Proof of \Cref{theorem: vertex connectivity}(\ref{item: x})]
Our labelling scheme ensures that $X \cap V(G) = S$; therefore, the nucleus of the co-separating triple is simply the closed neighbourhood $N_G[S]$ of $S$. We now show that $N_G[S] \subseteq \Lambda(S,\mu)$. Trivially, $S \subseteq \Lambda(S,\mu)$. For any vertex $v$ that is a neighbour of a vertex $s \in S$, we have from \Cref{obs: face distance between dummy-vertices of an edge} that $d_F(v,s)\leq \mu(G)$, so $v \in \Lambda(S,\mu)$. Therefore, $N_G[S] \subseteq \Lambda(S,\mu)$.    
\end{proof}

\begin{proof}[Proof of \Cref{theorem: vertex connectivity}(\ref{item: boundary})]
First, we look at vertices $v \in V(G^\times)$ labelled $X$, and show that $v\in \mathcal{B}(S,\mu(v))$. By $\mu(v)\leq \mu(G)-1$, all faces incident to $v$ will have face-distance at most $\mu(G)$ from $S$, implying that $v \notin \mathcal{Z}(S,\mu)$. If $v \in S$, then clearly $v\in \mathcal{B}(S,\mu(v))$, so assume that $v$ is a dummy vertex at the crossing point of two edges $\{e,e'\}$. If some endpoint of the crossing belongs to $S$, then by \Cref{obs: face distance between dummy-vertices of an edge}, $d_F(v,s)\leq \mu(v)$ for the endpoint $s \in S$; this implies that $s\in \mathcal{B}(v,\mu(v))$, and symmetrically, $v \in \mathcal{B}(s,\mu(v)) \subseteq \mathcal{B}(S,\mu(v))$. The remaining case is when all endpoints of the crossing are all labelled $A$ or $B$. As $v \in X$, all four endpoints cannot have the same label, and by \Cref{theorem: vertex connectivity}(\ref{item: edges}), the endpoints of each of $e$ and $e'$ are labelled the same. Hence, up to renaming $e$ and $e'$, assume that the endpoints of $e$ are labelled $A$ and those of $e'$ are labelled $B$. By definition of $\mu(v)$, there exists a ribbon $Q_v$ at $v$ such that $Q_v^\times \subseteq \mathcal{B}(v,\mu(v))$. Walking along the ribbon $Q_v$, we begin at a vertex labelled $A$, which is in flap $\phi_1$, and end at a vertex labelled $B$, which is in a flap other than $\phi_1$. Hence, the ribbon must contain a vertex $s \in S$. Therefore, $v \in \mathcal{B}(s,\mu(v)) \subseteq \mathcal{B}(S,\mu(v))$. 

Next, we look at vertices of $X$ that are face-vertices. If a vertex $u \in X$ is a face-vertex, then it belongs to a face of $G^\times$ where not all vertices are labelled $A$, or $B$. If the face contains vertices labelled both $A$ and $B$, then by walking along the face boundary, we must come across a vertex labelled $X$, since no two vertices labelled $A$ and $B$ can be adjacent (by Condition (C\ref{C2}) for co-separating triples). So $u$ is adjacent to a vertex $v \in V(G^\times)$ labelled $X$. Since $v \in \mathcal{B}(S,\mu(v))$, and $\mu(v) \leq \mu(G)-1$, we have that $u \in \Lambda(S,\mu)$. As $u$ is a face-vertex, trivially $u \notin \mathcal{Z}(S,\mu)$.
\end{proof}

\begin{proof}[Proof of \Cref{theorem: vertex connectivity}(\ref{item: holes})]
Every hole $F$ of $\Lambda(S,\mu)$ is bounded by a circuit $Z \subseteq \mathcal{Z}(\mathcal{B})$. By \Cref{theorem: vertex connectivity}(\ref{item: boundary}), no vertex of $X$ is in $Z$. Condition (C\ref{C2}) then implies that all vertices of $Z$ have the same label---without loss of generality, say it is $B$.   Since $F$ contains no part of $\Lambda(S,\mu)$ in the interior, \Cref{theorem: vertex connectivity}(\ref{item: boundary}) also implies that all vertices of $\Lambda(G)$ contained by $F$ can only have labels in $A \cup B$. It is sufficient to show that all vertices $v \in G^\times$ contained by $F$ are labelled $B$, since under our labelling scheme, this would imply that all face-vertices contained by $F$ are also labelled $B$. Suppose otherwise, for contradiction, that the hole contains a vertex $v \in A$. Then, either $v$ itself belongs to flap $\phi_1$, or $v$ is a dummy vertex and the endpoints of the crossing at $v$ belong to $\phi_1$. Now, choose a vertex $u \in \mathcal{B}(S,\mu)$ that belongs to $\phi_1$ (this exists by \Cref{theorem: vertex connectivity}(\ref{item: x})). Since $\phi_1$ is connected, one can find a $(u,v)$-path $P$ in $G^\times$ such that all vertices of $P$ are labelled $A \cup X$. Since $u$ and $v$ are on opposite sides of $Z$ in planar graph $\Lambda(G)$, path $P$ must intersect $Z$. This contradicts 
that all vertices of $Z$ are labelled $B$.      
\end{proof}

\begin{proof}[Proof of \Cref{theorem: vertex connectivity}(\ref{item: diameter})]
We will show that $\mathcal{B}(S,\mu)$ is connected, which will in turn imply that $\Lambda(S,\mu)$ is connected. By definition, the graph $\Lambda(s, \mu)$ is connected for each $s \in S$. It follows that $\mathcal{B}(s, \mu)$ is also connected, since any path in $\Lambda(s, \mu)$ that uses a face-vertex can be rerouted along the boundary of the corresponding face. This is valid because face boundaries in $G^\times$ are connected, as $G$ itself is connected. 

Suppose, for contradiction, that $\mathcal{B}(S,\mu)$ consists of at least two connected components $\mathcal{B}_1$ and $\mathcal{B}_2$.  As $\mathcal{B}(s,\mu)$ is connected for each $s \in S$, $\mathcal{B}_1 = \mathcal{B}(S_1,\mu)$ and $\mathcal{B}_2 = \mathcal{B}(S_2,\mu)$ for some non-empty disjoint sets $S_1 \subseteq S$ and $S_2 \subseteq S$. Since $\mathcal{B}_1$ and $\mathcal{B}_2$ are disjoint, $\mathcal{Z}(S_2,\mu)$ contains a circuit $Z$ that separates the embedding of $G^\times$ into two sides, one of which contains $\mathcal{B}_1$ and the other contains $\mathcal{B}_2$. As no vertex of $Z$ is labelled $X$ (\Cref{theorem: vertex connectivity}(\ref{item: boundary})), and by Condition (C\ref{C2}) of co-separating triples, all vertices of $Z$ must be labelled the same, say $A$.  Let $\phi_2$ be an arbitrary flap different from $\phi_1$, and choose a pair of vertices $u \in \mathcal{B}_1$ and $v \in \mathcal{B}_2$ such that $u,v\in \phi_2$. These must exist, since for any $s \in S$, there is a vertex of each flap adjacent to $s$ (by minimality of $S$), and by Observation~\ref{obs: face distance between dummy-vertices of an edge}, all neighbours of $s$ belong to $\Lambda(s,\mu)$.  As $\phi_2$ is connected, one can pick a $(u,v)$-path $P$ in $G$ such that all vertices of $P$ belong to $\phi_2$. Then the labels of all vertices of $P^\times$ belong to $B \cup X$. Since $P^\times$ connects a vertex of $\mathcal{B}_1$ with a vertex of $\mathcal{B}_2$, the path must contain a vertex of $Z$, contradicting that vertices of $Z$ are labelled $A$. So $\mathcal{B}(S,r)$ is connected, and hence $\Lambda(S,r)$ is connected.

We now show that the diameter of $\Lambda(S,\mu)$ is at most $|S|(4\mu +1)$. Since $\Lambda(S,\mu)$ is connected, for any pair of vertices $u,v \in \Lambda(S,\mu)$, there exists a $(u,v)$-path $P:= \langle u = w_1, w_2, \dots, w_{|P|} = v \rangle$ within $\Lambda(S,\mu)$. Now, we expand this into a walk $W$ by adding detours to vertices in $S$. For any vertex $w_i$, let $S(w_i)$ be the vertex in $S$ closest to $w_i$ in the graph $\Lambda(S,\mu)$. Let $W := w_1 \leadsto S(w_1) \leadsto w_1, w_2 \leadsto S(w_2) \leadsto w_2 \dots w_{|P|} \leadsto S(w_{|P|}) \leadsto w_{|P|}$. Next, for as long as $S(w_i) = S(w_j)$ for some $i \neq j$, prune the entire sub-walk between $S(w_i)$ and $S(w_j)$. We therefore end with a walk
$W^*$ where every vertex of $S$ occurs at most once. 
It consists of at most $2|S|$ paths between $w_i$ and $S(w_i)$, each of which has length at most $2\mu$ (since $d_F(w_i,S(w_i)) \leq \mu$ and the graph-distance in $\Lambda(G)$ is at most twice the face-distance), 
plus at most $|S|-1$ edges connecting $w_{i}$ to $w_{i+1}$. Therefore $|W^*|\leq (2|S|)(2\mu) + |S|-1 \leq |S|(4\mu+1)$ as required.
\end{proof}


\section{Connectivity Testing for Graphs with Small Ribbon Radius}\label{sec: finding min vertex and edge cut}

Our objective in this section is to prove the following:

\begin{theorem}\label{theorem: main theorem}
Given an embedded graph $G$ and an upper bound $\mu$ on the ribbon radius of $G$, one can compute its vertex connectivity $\kappa$ and its edge-connectivity $\lambda$ in time $2^{O(\mu\kappa)}|V(G^\times)|$ and $2^{O(\mu\lambda)}|V(G^\times)|$ respectively.
\end{theorem}

The overall approach of the proof follows the same idea as in \cite{biedl2022computing}. However, since the structure of co-separating triples used there differs from ours, and the graphs are not necessarily 1-plane, we cannot directly apply their algorithm and must instead adapt their techniques to suit our setting. 

As pointed out in \cite{biedl2022computing}, if $(A,X,B)$ is a co-separating triple of $(\Lambda(G),G)$, then $X \cap V(G)$ is a vertex cut of $G$ that separates $A \cap V(G)$ and $B \cap V(G)$. Conversely, \Cref{theorem: vertex connectivity}(2) shows that if $S$ is any minimal vertex cut of $G$, then there is a co-separating triple $(A,X,B)$ of $(\Lambda(G),G)$ such that $X \cap V(G) = S$. Therefore,  computing the vertex connectivity of an embedded graph is equivalent to finding a co-separating triple $(A,X,B)$ of $(\Lambda(G), G)$ with $|X \cap V(G)|$ minimized. 
To simplify 
this search, we make use of the fact that co-separating triples associated with minimal vertex cuts have low nuclear-diameter (\Cref{theorem: vertex connectivity}(5)). Taking the approach of \cite{biedl2022computing}, we break down $\Lambda(G)$ into low-diameter planar graphs (depending on $\mu$) $\Lambda_0, \Lambda_1, \dots, \Lambda_d$, for some finite integer $d$, such that the kernel resides completely in some such subgraph $\Lambda_i$. Since planar graphs with small diameter have small treewidth, we use the low-width tree-decompositions of $\Lambda_i$ to search for a co-separating triple of $(\Lambda_i,G)$ that can then be lifted into a co-separating triple of $(\Lambda(G), G)$. 

\subsection{Graphs \texorpdfstring{$\Lambda_i$}{Lambdai}}

Our description of the algorithm will focus on computing vertex connectivity. Only minor modifications to the algorithm are required to compute edge connectivity; these will be addressed later in \Cref{subsection: edge connectivity}. The input to our algorithm is an embedded graph $G$ (given via a rotation system of $G^\times$) and an upper bound $\mu$ on the ribbon radius. We first compute $\Lambda(G)$ and perform a Breadth-First Search (BFS) on it, thereby partitioning $V(\Lambda(G))$ into layers $V_0, V_1, \dots, V_d$, where vertices of layer $V_i$ are at distance exactly $i$ from the source vertex of the BFS. Let $s \in \{1,2,\dots, \kappa(G)\}$; we wish to test whether there exists a vertex cut
of size $s$. (We will test all values of $s$ until we succeed.) Set $w := s(4\mu+1)$; this corresponds to the bound on the nuclear diameter from Theorem \ref{theorem: vertex connectivity}(5). For $a \leq b$, let $\Lambda[V_a \dots V_b]$ denote the subgraph of $\Lambda(G)$ induced by vertices in the layers $V_a \cup \dots \cup V_b$ (for convenience, we define $V_i = \emptyset$ for all $i < 0$ and $i > d$).   

\begin{lemma}[Lemma 18 in \cite{arxivFull}]
\label{claim:U and L}
For $i\in \{0,\dots,d\}$, 
there exists a graph $\Lambda_i := \Lambda[V_{i-1} \cup \dots \cup V_{i+w+1}] \cup (V_{i-1},U_{i-1}) \cup (V_{i+w+1},L_{i+w+1})$, where $(V_{i-1},U_{i-1})$ and $(V_{i+w+1},L_{i+w+1})$ are graphs such that:
\begin{enumerate}
\item For any $u,v \in V_{i-1}$, there is a $(u,v)$-path in $(V_{i-1}, U_{i-1})$ if and only if there is a $(u,v)$-path in $\Lambda[V_{0}\cup \dots \cup V_{i-1}]$.   
\item For any $u,v \in V_{i+w+1}$, there is a $(u,v)$-path in $(V_{i+w+1}, L_{i+w+1})$ if and only if there is a $(u,v)$-path in $\Lambda[V_{i+w+1}\cup \dots \cup V_{d}]$.    
\item $\Lambda_i$ is planar and has radius at most $w{+}2$.
\end{enumerate}
 The total time to compute edge sets $\{U_i,L_i\}$ for all $i \in \{0,\dots,d\}$ is $O(|V(\Lambda(G))|)$.
\end{lemma}

The graphs $(V_{i-1},U_{i-1})$ and $(V_{i+w+1},L_{i+w+1})$ represent the projection of components of $\Lambda[V_0 \dots V_{i-1}]$ and $\Lambda[V_{i+w+1} \dots V_{d}]$ onto $V_{i-1}$ and $V_{i+w+1}$ respectively (see Figure \ref{fig: bfs tree} for an illustration and \cite{arxivFull,biedl2022computing} for more details). In \cite{arxivFull,biedl2022computing} it was shown, by way of \Cref{lem: Lambda to Lambda_i,lem: Lambda_i to Lambda}, that there exists a bijection between co-separating triples of $(\Lambda(G),G)$ and co-separating triples of $(\Lambda_i,G)$. One can easily verify that the proof in \cite{arxivFull} does not use 1-planarity, i.e., it works for all embedded graphs.

\begin{lemma}[Lemma 20 in \cite{arxivFull}]\label{lem: Lambda to Lambda_i}
    If there exists a co-separating triple $(A,X,B)$ of $(\Lambda(G), G)$ with nuclear diameter at most $w$, then there exists an index $i$ and a co-separating triple $(A_i,X,B_i)$ of $(\Lambda_i, G)$ where $A_i \subseteq A$, $B_i \subseteq B$, and $X \subseteq V_{i} \cup \dots \cup V_{i+w}$.
\end{lemma}

\begin{lemma}[Lemma 21 in \cite{arxivFull}]\label{lem: Lambda_i to Lambda}
   If $(A_i,X,B_i)$ is a co-separating triple of $(\Lambda_i, G)$ where $X \subseteq V_{i} \cup \dots \cup V_{i+w}$, then there exist sets $A$ and $B$ such that $A_i \subseteq A$, $B_i \subseteq B$, and $(A,X,B)$ is a co-separating triple of $(\Lambda(G), G)$. 
\end{lemma}

The broad idea of our algorithm is to iterate through $s=1,2,\dots$, where for each iteration, we test each $\Lambda_i$, in the order $i = 1,2,\dots,d$, for a co-separating triple $(A_i,X,B_i)$ where $X \subseteq V_i \cup \dots \cup V_{i+w}$ and $|X \cap V(G)| = s$. If we fail to find such a co-separating triple for all $\Lambda_i$ for the current value of $s$ we try $s + 1$. Thus, the problem now reduces to designing an algorithm that can test whether $\Lambda_i$ has a co-separating-triple $(A_i, X, B_i)$ with $X \subseteq V_i \cup \dots \cup V_{i+w}$ and $|X \cap V(G)| = s$.

\begin{figure}
    \centering
      \includegraphics[page = 1, scale = 0.7]{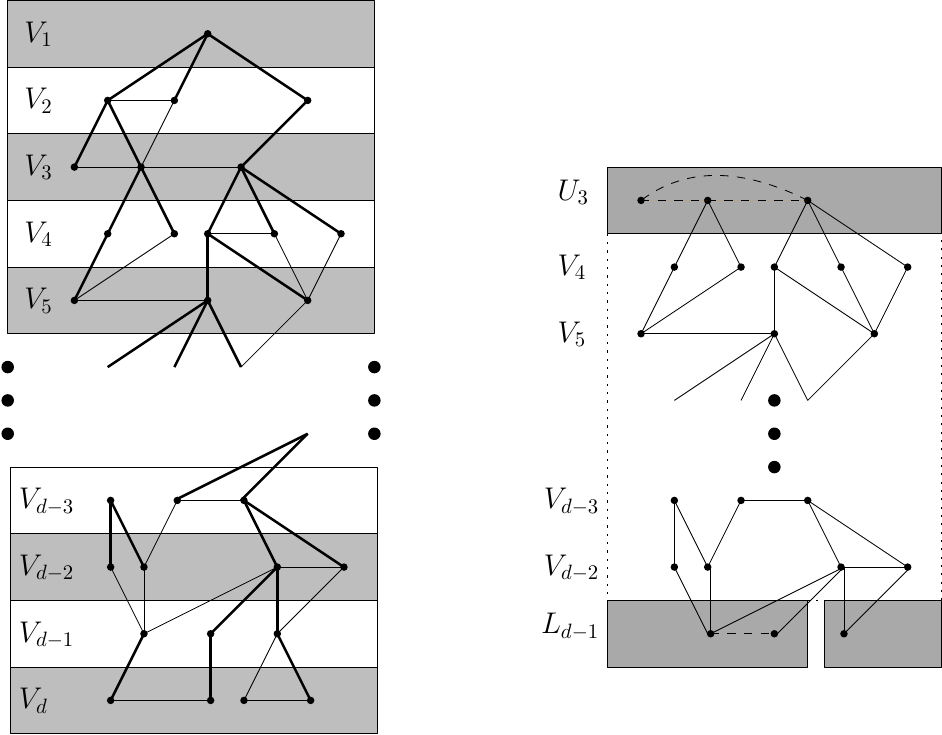}
      \caption{Constructing $\Lambda_i$ for $i=4$}
      \label{fig: bfs tree}
\end{figure}

We know from \cite{biedl2022computing} that $\Lambda_i$ is a planar graph of radius at most $w+2$, and hence we can find a tree decomposition (defined in \Cref{subsec: tree decompositions})  of $\Lambda_i$ of width $O(w)$ in $O(w|V(\Lambda_i)|)$ time \cite{baker1994approximation, eppstein2002subgraph}. 
It was shown in \cite{biedl2022computing,arxivFull} that one can dispense with testing Condition (C\ref{C3}) for co-separating triples by building an auxiliary graph $\Lambda_i^+$ that has the same vertex set as $\Lambda_i$ and treewidth at most constant times that of $\Lambda_i$. Unfortunately, this proof heavily depends on the 1-planarity of the graph. Therefore, in \Cref{app: sec: dp}, we provide a separate proof that applies to all embedded graphs with bounded ribbon radius. The next step in \cite{biedl2022computing,arxivFull} was to show that Conditions (C\ref{C1}) and (C\ref{C2}) can be phrased as a formula $\Phi$ in monadic second-order logic (MSOL). By Courcelle's theorem \cite{courcelle1990monadic}, we can therefore test in $O(f(w,|\Phi|)\, |V(\Lambda_i)|)$ time whether there exists a partition $(A_i,X,B_i)$ of $V(\Lambda_i^+)$ that satisfies Conditions (C\ref{C1}) and (C\ref{C2}). Unfortunately, function $f(\cdot)$ could be large, e.g.~a tower of exponentials.  
To improve the run-time, we give, in \Cref{subsec: dp}, a direct dynamic-programming-based algorithm whose running time is $2^{O(w)} |V(G^\times)| = 2^{O(\mu s)}|V(G^\times)|$.

\subsection{Tree Decompositions}\label{subsec: tree decompositions}

First, we review the definitions of tree decomposition and treewidth. 

\begin{definition}[Tree Decomposition]
A \emph{tree decomposition} of a graph $G$ is a tree $\calT$ whose
nodes are associated with subsets of $V(G)$ called \emph{bags} such that
\begin{enumerate}
    \item Every vertex $v$ of $G$ belongs to at least one bag of $\calT$;
    \item For every edge $(u,v)$ of $G$, there exists a bag $Y\in V(\calT)$ with $u,v\in Y$;
    \item For every vertex $v$ of $G$, the set of bags containing $v$ forms a connected subtree of $\calT$.
\end{enumerate}
\end{definition}

The \emph{width} of a tree decomposition is $\max_{Y\in V(\calT)} |Y|-1$,
and the \emph{treewidth} of a graph $G$ is the smallest width of a tree
decomposition of $G$. For purposes of dynamic programming, it is often helpful to have tree decompositions in a special form, which we call \emph{rooted binary tree decompositions}. (This is a weaker notion of the so-called nice tree decompositions; see \cite{bodlaneder_nice, kloks1994treewidth} for a reference.)

\begin{definition}[Rooted Binary Tree Decomposition]
A tree decomposition $\calT$ is a rooted binary tree decomposition if:
\begin{enumerate}
    \item $\calT$ is a rooted binary tree;
    \item If a node $\mathcal{N}$ has two children $\mathcal{N}_1$ and $\mathcal{N}_2$, then the bags at $\mathcal{N}$, $\mathcal{N}_1$ and $\mathcal{N}_2$ are all identical.
\end{enumerate}
\end{definition}

Any tree-decomposition $\calT'$ of width $\omega$ and $|\calT'|$ nodes can be transformed, in time $O(\omega|\calT'|)$, into a rooted binary tree decomposition $\calT$ of width $\omega$ and $O(|\calT'|)$ nodes. This is done as follows: First, root the tree $\calT'$ at an arbitrary node. Then, for every node $\mathcal{N}$ with $d \geq 2$ children in $\calT'$, replace $\mathcal{N}$ by a binary tree with $d$ leaves (hence $2d-1$ nodes), with every node of the binary tree having the same bag as $\mathcal{N}$. The children of $\mathcal{N}$ are now attached to the leaves of the binary tree. The resulting tree decomposition $\calT$ is a rooted binary tree decomposition of size $O(|\calT'|)$.

\subsection{Graphs \texorpdfstring{$\Lambda_i^+$}{Lambdaiplus}} \label{app: sec: dp}

Before giving the dynamic programming algorithm, we discuss the construction of graph $\Lambda_i^+$ so that we can forgo testing for Condition (C\ref{C3}). To understand the relevance of \Cref{lem: lambda_i+}(\ref{it:planarization_in_one}), it will be helpful to recall from \Cref{theorem: vertex connectivity} that there exists a co-separating triple of $(\Lambda(G),G)$ where all vertices of $e^\times$ belong entirely to $A \cup X$, or to $B \cup X$. When combined with \Cref{lem: Lambda to Lambda_i}, this implies that there exists a co-separating triple $(A_i,X,B_i)$ of $(\Lambda_i,G)$, for some $i \in \{0,\dots, d\}$, such that $X \subseteq V_{i} \cup \dots \cup V_{i+w}$, and all vertices of $e^\times \cap \Lambda_i$ belong entirely to $A_i \cup X$, or to $B_i \cup X$. (For \Cref{lem: lambda_i+}, it may be useful to recall that $w \in O(\mu s)$.) 

\begin{lemma}\label{lem: lambda_i+}
For every index $i \in \{0,\dots,d\}$, there exists a graph $\Lambda_i^+$, where $V(\Lambda_i^+) = V(\Lambda_i)$, such that the following hold:
\begin{enumerate}
    \item Let $(A_i,X,B_i)$ be a partition of $V(\Lambda_i)$ such that $X \subseteq V_{i} \cup \dots \cup V_{i+w}$.
    \begin{enumerate}
        \item \label{it:planarization_in_one} If $(A_i,X,B_i)$ is a co-separating triple of $(\Lambda_i,G)$ such that for any edge $e\in E(G)$ all vertices of $e^\times \cap \Lambda_i$ belong entirely to $A_i \cup X$, or to $B_i \cup X$, then $(A_i,X,B_i)$ satisfies Conditions (C\ref{C1}) and (C\ref{C2}) (in \Cref{def: co-sep}) for $\Lambda_i^+$.

        \item If $(A_i,X,B_i)$ satisfies Conditions (C\ref{C1}) and (C\ref{C2}) (in \Cref{def: co-sep}) for $\Lambda_i^+$, then $(A_i,X,B_i)$ is a co-separating triple of $(\Lambda_i,G)$.
    \end{enumerate}
       
    \item There is a rooted binary tree decomposition $\calT_i^+$ of $\Lambda_i^+$ of width $O(w)$ and $O(|V(\Lambda_i)|)$ nodes, and the total run-time to find $\calT_i^+$ for all $i \in \{0, \dots, d\}$ is $O(w^2|V(\Lambda(G))|)$.
\end{enumerate}
\end{lemma}

We prove the lemma in the remainder of this subsection.
Fix an index $i \in \{0,\dots, d\}$, and let $e \in E(G)$ be an edge such that $e^\times \cap \Lambda_i \neq \emptyset$. The graph $e^\times \cap \Lambda_i$ is a collection of vertex disjoint paths $\{e^\times(z_1 \dots z_2), e^\times(z_3 \dots z_4), \dots, e^\times(z_{2q-1} \dots z_{2q})\}$; we call the vertices $z_1,z_2\dots,z_{2q}$ \emph{transition-points} of $e^\times$ (\Cref{fig:transition-points}). To construct $\Lambda_i^+$, repeat the following process for all paths of $e^\times \cap \Lambda_i$, for all edges $e$ where $e^\times \cap \Lambda_i \neq \emptyset$: Make any dummy vertex on $e^\times(z_{2j-1}\dots z_{2j})$ adjacent to both transition-points $z_{2j-1}$ and $z_{2j}$, and add edge $(z_{2j-1},z_{2j})$ (\Cref{fig:transition-points}). 

\begin{figure}
\hspace*{\fill}
      \includegraphics[page = 2, scale = 0.75]{Figures/BFS_tree.pdf}
\hspace*{\fill}
      \includegraphics[page = 3, scale = 0.75]{Figures/BFS_tree.pdf}
\hspace*{\fill}
      \caption{An edge $e = (u,v)$; the paths of $e^\times$ in $\Lambda_i$ are bold.   We add edges from all vertices of $e^\times \cap \Lambda_i$ to the corresponding transition-points.} 
      \label{fig:transition-points}
\end{figure}

\begin{proof}[Proof of \Cref{lem: lambda_i+}(1)] 
It is straight-forward to prove \Cref{lem: lambda_i+}(1a). Let $(A_i,X,B_i)$ be a co-separating triple of $(\Lambda_i,G)$ as described. We need to show that these sets satisfy Conditions (C\ref{C1}) and (C\ref{C2}) for $\Lambda_i^+$. Condition (C\ref{C1}) holds trivially because $V(\Lambda_i^+) = V(\Lambda_i)$. Condition (C\ref{C2}) holds because each edge of $\Lambda_i^+$ is either an edge of $\Lambda_i$ or is an edge that connects two vertices of $e^\times \cap \Lambda_i$, for some $e \in E(G)$.

Next, we prove \Cref{lem: lambda_i+}(1b). Let $(A_i,X,B_i)$ be a partition of $V(\Lambda_i^+)$ as described. We need to show that $(A_i,X,B_i)$ is a co-separating triple of $(\Lambda_i,G)$. Condition (C\ref{C1}) holds trivially by $V(\Lambda_i^+) = V(\Lambda_i)$, and Condition (C\ref{C2}) holds by $E(\Lambda_i) \subseteq E(\Lambda_i^+)$. We are left with showing that Condition (C\ref{C3}) holds. Let $e \in E(G \setminus (X \cap V(G))$. First, consider the case where $e^\times \cap \Lambda_i = e^\times$. Here, there are only two transition points of $e^\times$---the two end vertices of $e$. Since the two end vertices are adjacent in $\Lambda_i^+$, we may assume without loss of generality that they belong to $A_i$. For every dummy vertex $z \in e^\times$, there is an edge in $\Lambda_i^+$ to both endpoints of $e$. Therefore, by Condition (C\ref{C2}), no vertex on $e^\times$ can belong to $B$. Next, consider the case where $e^\times \cap \Lambda_i \neq e^\times$, so there exist transition points $z_2,\dots,z_{2q-1}$. Each of these transition points must belong to $V_{i-1} \cup V_{i+w+1}$ (for otherwise, its neighbours on $e^\times$ would also be within $\Lambda_i$). In consequence, none of $z_2,\dots,z_{2q-1}$ can be in $X$. Up to symmetry, suppose that $z_2\in A_i$. We will show that all of $z_2,\dots,z_{2q-1}$ are in $A_i$. By induction, assume that $z_{2j} \in A_i$, for some $j<q$. Then $e^\times(z_{2j}\dots z_{2j+1}) \subseteq \Lambda[V_0\cup \dots \cup V_{i-1}]\cup \Lambda[V_{i+w+1}\cup \dots \cup V_{d}]$, hence $z_{2j+1}$ is in the same connected component of $(V_{i-1},U_{i-1}) \cup (V_{i+w+1},L_{i+w+1})$ as $z_{2j}$ (\Cref{claim:U and L}). By Condition (C\ref{C2}), all vertices in this component, and $z_{2j+1}$ in particular, belong to $A_i$. Since $e^\times(z_{2j+1} \dots z_{2j+2}) \subseteq \Lambda_i$, we have $(z_{2j+1},z_{2j+2}) \in E(\Lambda_i^+)$; Condition (C\ref{C2}) then implies that $z_{2j+2} \in A_i$. Therefore, by induction, $\{z_2,\dots,z_{2q-1}\} \subseteq A_i$. By this, every path in $e^\times \cap \Lambda_i$ has at least one end in $A_i$. Consider any such path $e^\times(z_{2j-1}\dots z_{2j})$. Since there is an edge in $\Lambda_i^+$ connecting every vertex on $e^\times(z_{2j-1}\dots z_{2j})$ to both $z_{2j-1}$ and $z_{2j}$, Condition (C\ref{C2}) forces all vertices on $e^\times(z_{2j-1}\dots z_{2j})$ to belong to $A_i \cup X$. Therefore, all vertices of $e^\times \cap \Lambda_i$ belong to $A_i \cup X$, and Condition (C\ref{C3}) holds.
\end{proof}

\begin{proof}[Proof of \Cref{lem: lambda_i+}(2)]
To construct a tree decomposition $\calT_i^+$ of $\Lambda_i^+$, we do some pre-processing on $\Lambda(G)$ so that we have a data structure $D(i)$, for each $i \in \{0,\dots,d\}$, that stores the following information: A list of dummy vertices in $\Lambda_i$, and for each dummy vertex $w$ at the crossing of two edges $\{e_1,e_2\}$, a corresponding set of at most 4 vertices for the transition points on the two sub-paths of $e_1^\times \cap \Lambda_i$ and $e_2^\times \cap \Lambda_i$ that intersect at $w$. To construct $D(i)$, first compute for every edge $e\in E(G)$ the values $(m_e,M_e)$ such that $e^\times\cap \Lambda_i$ is non-empty for all $m_e \leq i \leq M_e$. This takes $O(|e^\times|)$ time for each edge, so $O(|V(\Lambda(G))|)$ time for all edges in total. Then, bucket-sort the sets $\{m_e: e \in E(G)\}$ and $\{M_e: e \in E(G)\}$ in time $O(|E(G)|) \subseteq O(|E(\Lambda(G)|) = O(|V(\Lambda(G)|)$. For every $i \in \{0,\dots,d\}$, $e^\times \cap \Lambda_i \neq \emptyset$ if and only if $m_e \leq i+w+1$ and $M_e \geq i-1$. Since the lists $\{m_e: e \in E(G)\}$ and $\{M_e: e \in E(G)\}$ are sorted, one can compute, for all $i \in \{0,\dots,d\}$, the set $E_i := \{e: e^\times \cap \Lambda_i \neq \emptyset\}$. By traversing $e^\times$ for each $e \in E_i$, we can compute the set of paths in $e^\times \cap \Lambda_i$, and for each vertex in each path, store the two transition-points contained in that path. With this, we can create the data structure $D(i)$ as required. As any two vertices of $e^\times$ are within distance at most $2\mu < w$ in $\Lambda(G)$ (by \Cref{obs: face distance between dummy-vertices of an edge}), every edge $e \in E(G)$ is processed at most $w$ times in the computation of $E_i$ and construction of $D(i)$. Hence, the total time spent for all edges $e$ and indices $i$ is $O(w\sum_{e\in E}|e^\times|) = O(w|V(\Lambda(G))|)$.

From \Cref{claim:U and L}, the total to compute $\Lambda_i$ for all $i \in \{0, \dots, d\}$ is $O(|V(\Lambda(G))|)$. Since $\Lambda_i$ is planar and has radius at most $w+2$ \cite{biedl2022computing}, one can compute a tree decomposition $\calT_i$ of width $O(w)$ and $O(|V(\Lambda_i)|)$ nodes in $O(|V(\Lambda_i)|)$ time \cite{baker1994approximation,eppstein2002subgraph}. Having computed $\calT_i$, we build $\calT_i^+$ as follows.  Parse the tree decomposition $\calT_i$, and for every dummy-vertex $w$ that we see in a bag $X$, add the vertices that are stored in $D(i)$ associated with $w$ (there are at most four such vertices). It is easy to verify that this is a valid tree-decomposition; we leave the details of this to the reader. Since the size of each bag of $\calT_i$ is increased at most five-fold, 
the new tree decomposition has width $O(w)$.
Finally, this tree decomposition is turned into a rooted binary tree decomposition $\calT_i^+$
in $O(w|V(\Lambda_i)|)$ time (\Cref{subsec: tree decompositions}). 
Therefore, the total time to compute all $\calT_i^+$ is $\sum_{i=0}^d O(w|V(\Lambda_i)|)$ = $O(w^2|V(\Lambda(G))|)$, since each vertex of $\Lambda(G)$ belongs to at most $w+2$ graphs of $\Lambda_0,\dots,\Lambda_d$.
\end{proof}

\subsection{Dynamic Programming}\label{subsec: dp}

From \Cref{theorem: vertex connectivity}, we know that if $G$ has a minimal vertex cut of $s$, then there exists a co-separating triple $(A,X,B)$ of $(\Lambda(G),G)$ with nuclear diameter at most $w := 4s(\mu(G)+1)$ such that all vertices of $e^\times$ belong entirely to $A \cup X$, or to $B \cup X$. From \Cref{lem: Lambda to Lambda_i,lem: lambda_i+}, this co-separating triple translates to a partition $(A_i,X,B_i)$ of $\Lambda_i^+$, for some $i \in \{0,\dots,d\}$, such that $X \subseteq V_i \cup \dots \cup V_{i+w}$, and Conditions (C\ref{C1}) and (C\ref{C2}) are satisfied. Conversely, the existence of any such partition of $\Lambda_i^+$ implies a co-separating triple of $\Lambda_i$ (\Cref{lem: lambda_i+}), which in turn implies a co-separating triple $(A,X,B)$ of $\Lambda(G)$ (\Cref{lem: Lambda_i to Lambda}). The set $X \cap V(G)$ is then a vertex cut of $G$. Due to the chain of these equivalences, it is sufficient to test whether there exists a partition $(A_i,X,B_i)$ of $\Lambda_i^+$ such that $|X \cap V(G)| \leq s$, and $X \subseteq V_i \cup \dots \cup V_{i+w}$.  These conditions could easily be expressed in MSOL, and the test could therefore be done in linear time (with a large constant) using Courcelle's theorem \cite{arxivFull}. To improve the run-time, we will instead show here how to do this test by means of dynamic programming on the rooted binary tree decomposition $\calT_i^+$ of $\Lambda_i^+$.  

The main idea behind the dynamic programming algorithm is to proceed bottom-up in the tree, from the leaves to the root, and at each node, store all partial solutions corresponding to the subtree at that node, based on partial solutions that have been computed and stored at the children. We now define precisely what we mean by a partial solution. To simplify notations, we let $H: = \Lambda_i^+$. Let $\mathcal{N}$ be a node of $\mathcal{T}_i^+$, $V(\mathcal{N})$ be the vertices of $H$ in the bag corresponding to $\mathcal{N}$, and $H_\mathcal{N}$ be the subgraph of $H$ induced by the bags corresponding to the nodes in the subtree rooted at $\mathcal{N}$. A \textit{partial solution} $\mathcal{P}$ at the node $\mathcal{N}$ stores the following information.

\begin{enumerate}
    \item A function $f_\mathcal{P}: V(\mathcal{N}) \mapsto \{A, X, B\}$.  
    
    This function must satisfy Condition (C\ref{C2}) for $H[V(\mathcal{N})]$: for any two vertices $u, v \in V(\mathcal{N})$, if $uv \in E(H)$, then either $f_{\mathcal{P}}(u) \neq A$ or $f_{\mathcal{P}}(v) \neq B$. It also must satisfy that $f_{\mathcal{P}}(u)\neq X$ for $u \in V_{i-1} \cup V_{i+w+1}$.

    \smallskip
    (Note that while $A, X, B$ were previously used to denote sets of vertices, they are now used as labels on individual vertices to indicate their set membership. Also, for simplicity, we use the labels $A$ and $B$ instead of $A_i$ and $B_i$.)  
    
    \item An integer variable $\Sigma_\mathcal{P}$.
    
    \item Two boolean variables $\chi^A_\mathcal{P}$ and $\chi^B_\mathcal{P}$.
\end{enumerate} 

The variable $\Sigma_\mathcal{P}$ is intended to store the value of $|X \cap V(G)|$ in $V(H_\mathcal{N})$, while $\chi^A_\mathcal{P}$ and $\chi^B_\mathcal{P}$ are meant as indicators that are set to true if and only if $A_i \cap V(G)$ and $B_i \cap V(G)$ are non-empty in $H_\mathcal{N}$. For a partial solution at a node to reflect this information properly, they must be compatible with partial solutions at their children, which in turn must be compatible with partial solutions at their children, and so on, so that there is a non-conflicting assignment of vertices in $V(H_\mathcal{N})$ to $A_i$, $X$ or $B_i$. For this, we inductively define what are called \textit{valid partial solutions}.

A partial solution $\mathcal{P}$ at a leaf node $\mathcal{N}$ is valid if the following holds: 
\begin{itemize}
    \item $\Sigma_\mathcal{P} = |u \in V(\mathcal{N}): u \in V(G) \text{ and } f_\mathcal{P}(u) = X|$ and $\Sigma_\mathcal{P} \leq s$.
    
    \item For label $\Delta \in \{A,B\}$,  $\chi^\Delta_\mathcal{P} = 1$ if and only if there exists a vertex $u \in V(\mathcal{N}) \cap V(G)$ such that $f_{\mathcal{P}}(u) = \Delta$. 
\end{itemize}

A partial solution $\mathcal{P}$ at a non-leaf node $\mathcal{N}$ with a single child $\mathcal{N}_1$ is valid if there exists a valid partial solution $\mathcal{P}_1$ at $\mathcal{N}_1$ such that:

\begin{itemize}
    \item For any vertex $u \in V(\mathcal{N}_1) \cap V(\mathcal{N})$ we have $f_{\mathcal{P}_1}(u) = f_\mathcal{P}(u)$.
    
    \item $\Sigma_\mathcal{P} = \Sigma_{\mathcal{P}_1} +  |u \in V(\mathcal{N}) \cap V(G): f_{\mathcal{P}}(u) = X \text{ and } u \notin V(\mathcal{N}_1)|$ and $\Sigma_\mathcal{P} \leq s$.

    \item For label $\Delta \in \{A,B\}$, $\chi^\Delta_\mathcal{P} = 1$ if and only if either $\chi^\Delta_{\mathcal{P}_1} = 1$ or there exists a vertex $u \in V(\mathcal{N}) \cap V(G)$ such that $f_{\mathcal{P}}(u) = \Delta$.
\end{itemize}

A partial solution $\mathcal{P}$ at a non-leaf node $\mathcal{N}$ with two children $\mathcal{N}_1$ and $\mathcal{N}_2$ is valid if 
there exist valid partial solutions $\mathcal{P}_1$ and $\mathcal{P}_2$ at its children $\mathcal{N}_1$ and $\mathcal{N}_2$ such that (recall that the bags at $\mathcal{N}$, $\mathcal{N}_1$ and $\mathcal{N}_2$ are identical):
\begin{itemize}
    \item $f_\mathcal{P} = f_{\mathcal{P}_1} = f_{\mathcal{P}_2}$
    
    \item $\Sigma_\mathcal{P} = \Sigma_{\mathcal{P}_1} + \Sigma_{\mathcal{P}_2} - |u \in V(\mathcal{N}) \cap V(G): f_{\mathcal{P}}(u) = X|$ and $\Sigma_\mathcal{P} \leq s$.

    \item For $\Delta \in \{A,B\}$, $\chi^\Delta_\mathcal{P} = 1$ if and only if
    $\chi^\Delta_{\mathcal{P}_1} = 1$ or $\chi^\Delta_{\mathcal{P}_2} = 1$.
\end{itemize}

\medskip
The dynamic programming succeeds if there is a valid partial solution $\mathcal{P}$ at the root node such that $\Sigma_\mathcal{P} \leq s$, and both $\chi^A_\mathcal{P} = 1$ and $\chi^B_\mathcal{P} = 1$. If the dynamic programming succeeds, then the sets $(A_i, X, B_i)$ can be obtained by back-tracking through the computation that produced the solution.

\subsection{Testing Edge Connectivity}\label{subsection: edge connectivity}

We only need a few modifications in our approach to compute edge connectivity. First, construct a graph $\tilde{G}$ by subdividing every edge of $G$ exactly once.   In terms of the drawing of $G$,  we place the subdivision vertices close to an endpoint of the edge, before the first dummy vertex on the edge. The subdivision vertices are stored in a separate list $D(\tilde{G})$. For any set of edges $T \subseteq E(G)$, we use the notation $\tilde{T}$ to denote the set of subdivision vertices on edges of $T$.

\begin{proposition}\label{obs: edge-co-sep}
Let $T \subseteq E(G)$ be a set of edges. Then $T$ is an edge cut of $G$ if and only if there exists a co-separating triple $(A,X,B)$ of $(\Lambda(\tilde{G}), \tilde{G})$ such that $X \cap V(\tilde{G}) \subseteq \tilde{T}$.
\end{proposition}

\begin{proof}
If $T$ is an edge cut of $G$, then $\tilde{T}$ is a vertex cut of $\tilde{G}$. Applying \Cref{theorem: vertex connectivity} on a subset of $\tilde{T}$ that is a minimal vertex cut, we get a co-separating triple $(A,X,B)$ of $(\Lambda(\tilde{G}), \tilde{G})$ such that $X \cap V(\tilde{G}) \subseteq \tilde{T}$. For the other direction, suppose that there exists a co-separating triple $(A,X,B)$ of $(\Lambda(\tilde{G}), \tilde{G})$ such that $X \cap V(\tilde{G}) \subseteq \tilde{T}$. From \Cref{def: co-sep}, we know that $X \cap V(\tilde{G})$ separates $A \cap V(\tilde{G})$ and $B \cap V(\tilde{G})$. So it suffices to show that $A \cap V(G)$ and $B \cap V(G)$ are both non-empty. Suppose that $A$ contains a subdivision vertex $\tilde{e}$ of some edge $e=(u,v) \in E(G)$; without loss of generality, say it is placed close to $u$. This implies that $(u, \tilde{e})$ is an edge of $\tilde{G}$, and therefore, of $\Lambda(\tilde{G})$. By Condition (C\ref{C2}) of separating triples, this implies that $u \in A \cup X$. However, since $X \cap V(\tilde{G}) \subseteq D(\tilde{G})$, we must have $u\in A$. So, $A$ always contains a vertex of $G$, as does $B$.
\end{proof}

Therefore, computing edge connectivity is equivalent to the problem of finding a co-separating triple $(A,X,B)$ of $(\Lambda(\tilde{G}), \tilde{G})$ such that $X \cap V(\tilde{G}) \subseteq D(\tilde{G})$ and $|X \cap V(\tilde{G})|$ is the minimum possible. This can be achieved by only few minor changes to the algorithm for computing vertex connectivity: We let the input graph be $\tilde{G}$ instead of $G$ (it is easily seen that $\mu(G) = \mu(\tilde{G})$), and try to find a co-separating triple such that $X \cap V(\tilde{G}) \subseteq D(\tilde{G})$; this condition can be incorporated into the dynamic programming by enforcing the condition that $f_P(v) = X$ only if $v \in D(\tilde{G})$.

\subsection{Final Algorithm}

We now summarize the algorithm to test vertex and edge connectivity. Our input is a graph $G$ with an embedding described through a planar rotation system for $G^\times$. If our goal is to compute vertex connectivity, we simply compute $\Lambda(G)$. If our goal is to compute edge connectivity, we subdivide each edge of $G$ to obtain the graph $\tilde{G}$ (the subdivision vertices are placed close to an endpoint of the edge, as described before), and then compute $\Lambda(\tilde{G})$. We also assume that $\mu(G)$ is given as part of the input. Then, we do a BFS on $\Lambda(G)$ (or $\Lambda(\tilde{G})$ for edge connectivity), and obtain the layers $V_0, \dots, V_d$. Then, we compute the tree decompositions $\mathcal{T}_i^+$ for all $i \in \{0, \dots, d\}$. For each value of $s = 1,2,\dots,$ we test, by dynamic programming on $\calT_i^+$, for all $i \in \{0,\dots,d\}$, whether $\kappa(G) \leq s$ or $\lambda(G) \leq s$.

We now analyze the running time. As seen in \Cref{lem: lambda_i+}, the time to compute $\calT_i^+$ for all $i \in \{0,\dots,d\}$ is $O(w^2|V(\Lambda(G))|)$, where $w$ is the bound on nuclear diameter. However, this will be subsumed by the time to do dynamic programming, as we show next. We first bound the maximum possible number of valid partial solutions at each node. Each vertex of a bag has three choices of $A_i$, $X$ or $B_i$. Since each bag has $O(w)$ number of vertices, the number of different partitions of vertices of the node into $A_i$, $X$ and $B_i$ is $3^{O(w)}$. For each of these partitions, there are $s$ possible values of $\Sigma_\mathcal{P}$, and two possible values each for $\chi^A_\mathcal{P}$ and $\chi^B_\mathcal{P}$. Therefore, the total number of valid partial solutions is in $3^{O(w)}s \subseteq 2^{O(w)}$ since $w = O(\mu s)$. For each partial solution at a node, the time to verify if it is a valid partial solution is at most $O(w)$ times the size of the Cartesian product of valid partial solutions at its children---hence in $2^{O(w)} \times 2^{O(w)} \subseteq 2^{O(w)}$. Therefore, the time to compute and store all valid partial solutions at a node is also in $2^{O(w)}$. Since $\mathcal{T}_i^+$ has $O(|V(\Lambda_i)|)$ nodes (\Cref{lem: lambda_i+}), the time for dynamic programming is $2^{O(w)}|V(\Lambda_i)|$. When we sum across all of $\Lambda_i$, the total time is $\sum_{i = 1}^d 2^{O(w)}|V(\Lambda_i)| = 2^{O(w)}w|V(\Lambda(G))| \subseteq 2^{O(w)}|V(\Lambda(G))| = 2^{O(w)}|V(G^\times)|$. Since $w \in O(\mu s)$ and $s \leq \kappa(G)$ (or $s \leq \lambda(G)$), the runtime is in $2^{O(\mu \kappa)}|V(G^\times)|$ (or $2^{O(\mu \lambda)}|V(G^\times)|$). This completes the proof of \Cref{theorem: main theorem}. 


\section{Applications to Near-planar Graphs}\label{sec: algorithmic applications}

In \Cref{sec:applications}, we saw many classes of near-planar graphs with small ribbon radius. Now, we use \Cref{theorem: main theorem} to derive running times for computing vertex and edge connectivity for these classes of graphs (also refer to \Cref{table: results}). For many of these applications, we assume the drawing to be intersection-simple, which means that any pair of edges in the drawing intersect at most once, either at a common incident vertex, or at a crossing point. 

\begin{proposition}\label{cor: main result k-plane}
Given a $k$-plane graph $G$ with an intersection-simple drawing and ribbon radius $\mu$, we can compute $\kappa(G)$ and $\lambda(G)$ in time $2^{O(\mu\sqrt{k})}n$.   
\end{proposition}

\begin{proof}
  A simple $k$-plane graph has at most $4.108\sqrt{k}n$ edges \cite{pach1997graphs}. This implies that $\kappa(G) \leq \lambda(G) \leq \delta(G) \leq 8.216 \sqrt{k}$. Since each edge of $G$ is crossed at most $k$ times, $|V(G^\times)| \in O(nk^{3/2})$. Substituting these bounds into Theorem \ref{theorem: main theorem}, we get that both vertex and edge connectivity can be computed in time $2^{O(\mu\sqrt{k})}nk^{3/2} \subseteq 2^{O(\mu\sqrt{k})}n$.
\end{proof}

\begin{corollary}\label{cor: kplane constant mu}
Given a $k$-plane graph $G$, for some constant $k \in O(1)$, with an intersection-simple drawing such that $\mu(G) \in O(1)$, we can compute $\kappa(G)$ and $\lambda(G)$ in $O(n)$ time. 
\end{corollary}

\begin{corollary}\label{cor: q crossings per face}
Given a graph $G$ with an intersection-simple drawing such that each face of $\mathrm{sk}(G)$ has at most $q$ crossings, we can compute $\kappa(G)$ and $\lambda(G)$ in time $2^{O(q^{1.5})}n$.
\end{corollary}

\begin{proof}
Since $G$ has at most $q$ crossings, it is $q$-planar. By \Cref{claim: skeleton}, $\mu(G) \leq q+1$. Therefore, the time to compute $\kappa(G)$ and $\lambda(G)$ is in $2^{O(q^{1.5})}n$.
\end{proof}

\begin{corollary}\label{cor: dframed and dmap}
Given a graph $G$ with an intersection-simple drawing such that all crossings occur within faces of $\mathrm{sk}(G)$ whose boundary is a simple cycle with at most $d$ edges, we can compute $\kappa(G)$ and $\lambda(G)$ in time $2^{O(d^3)}n$.
\end{corollary}

\begin{proof}
Any edge of $G$ can be crossed only $O(d^2)$ times, and hence $G$ is $k$-plane for some $k \in O(d^2)$. We saw earlier in \Cref{claim:dframed} that $\mu(G) \in O(d^2)$. Therefore, by \Cref{theorem: main theorem}, $\kappa(G)$ and $\lambda(G)$ can be computed in time $2^{O(d^3)}n$.
\end{proof}

Note that \Cref{theorem: vertex connectivity,theorem: main theorem} do not stipulate that the embedded graph have an intersection-simple drawing. In \Cref{prop: crossing lemma}, we give an application for which the embedded graph need not have an intersection-simple drawing. 

\begin{proposition}\label{prop: crossing lemma}
Given an embedded graph $G$ with $o(\log n)$ crossings, we can compute $\kappa(G)$ and $\lambda(G)$ in time $n^{1+o(1)}$.
\end{proposition}

\begin{proof}
Let $n := |V(G)|$, $m := |E(G)|$ and $q := \mathrm{cr}(G)$, where $\mathrm{cr}(G)$ denotes the crossing number of $G$---the minimum number of crossing points over all drawings of $G$. By the well-known crossing number inequality \cite{crossing_lemma_1,crossing_lemma_2}, there is an absolute constant $c > 0$ such that $q \geq c\frac{m^3}{n^2}$ for all graphs with $m \geq 4n$. The best known constant till date is 1/29, due to Ackerman \cite{ackerman}, who showed that $q \geq \frac{m^3}{29 n^2}$ for all graphs with $m \geq 7n$. This implies that $\delta(G) \leq \frac{2m}{n} \leq \max\left \{14, \frac{2(29n^2 q)^{1/3}}{n} \right \} \leq \max \left \{14, 8(\frac{q}{n})^{1/3} \right \}$. Therefore, if $q \in o(\log n)$, both $\kappa(G), \lambda(G) \in O(1)$. By \Cref{cor: bounded crossings}, we have $\mu(G) \in o(\log n)$. Hence, by \Cref{theorem: main theorem}, both $\kappa(G)$ and $\lambda(G)$ can be computed in time $2^{o(\log n)}(n+o(\log n)) = n^{1+o(1)}$. 
\end{proof}


\section{Graphs with Large Ribbon Radius}\label{app :sec: large ribbon radius}

We have now seen a number of graph classes for which we can bound the ribbon radius. But what do graphs that have a large ribbon radius look like? Recall from \Cref{proposition: bounds on ribbon radius} that if a graph $G$ has at most $\gamma$ crossings with ribbon-radius exceeding $\alpha$, then $\mu(G)\leq \gamma+\alpha+1$. Therefore, if some graph $G$ has large ribbon-radius $\mu(G)$, then
setting $\alpha = \mu(G)/2 - 1$ in \Cref{proposition: bounds on ribbon radius}, we get that $G$ must have at least $\gamma \geq \mu(G)/2$ crossings with ribbon radius at least $\alpha + 1= \mu(G)/2$.
In fact, a closer inspection of the proof of \Cref{proposition: bounds on ribbon radius} reveals that for each crossing point $c$, and for all $1 \leq r \leq \mu(c)$, there is a vertex of $G^\times$ on the boundary of $\mathcal{B}(c,r)$ that has ribbon radius at least $\mu(c)-r$. This shows that crossings with large ribbon radius come clustered in the drawing.

Below, we give a construction of $k$-plane graphs with large ribbon radius. These graphs have an intersection-simple drawing and can be constructed to have connectivity $\Omega(\sqrt{k})$, which is asymptotically the best possible for $k$-plane graphs with intersection-simple drawings \cite{pach1997graphs}. A large value of connectivity ensures that there exists no simple algorithm that can find a minimum vertex or edge cut for these graphs. We construct these graphs $G$ such that for any minimum vertex cut $S$, the face-distance $d_F(s_1,s_2)$ is arbitrarily large in $\Lambda(G)$, for any two distinct vertices $s_1, s_2 \in S$. In other words, $\Lambda(S,r)$ is not a connected graph, for arbitrarily large values of $r$. Likewise, if $T$ is a minimum edge cut, we ensure that $d_F(\tilde{t}_1, \tilde{t}_2)$ is arbitrarily large in $\Lambda(\tilde{G})$, for any two distinct edges $t_1, t_2 \in T$. (Here, we follow the notations for edge connectivity as in \Cref{subsection: edge connectivity}.) Therefore, by \Cref{theorem: vertex connectivity}(\ref{item: diameter}), we infer that $\mu(G)$ must be large.

\begin{theorem}\label{thm: counterex edge conn}
For all positive integers $k,p,r$, where $p < 2\floor{\sqrt{k}}$, there exists a $k$-plane graph $G$ with a intersection-simple drawing such that:
\begin{enumerate}
    \item $\kappa(G) = \lambda(G) = p$;
    \item For any minimum vertex cut $S$ of $G$, $d_F(s_1,s_2) \geq r$ in $\Lambda(G)$ for all distinct $s_1,s_2 \in S$;
    \item For any minimum edge cut $T$ of $G$, $d_F(\tilde{t}_1, \tilde{t}_2) \geq r$ in $\Lambda(\tilde{G})$ for all distinct $t_1, t_2 \in T$ 
\end{enumerate}
\end{theorem}

\begin{figure}[h]
    \centering
    \includegraphics{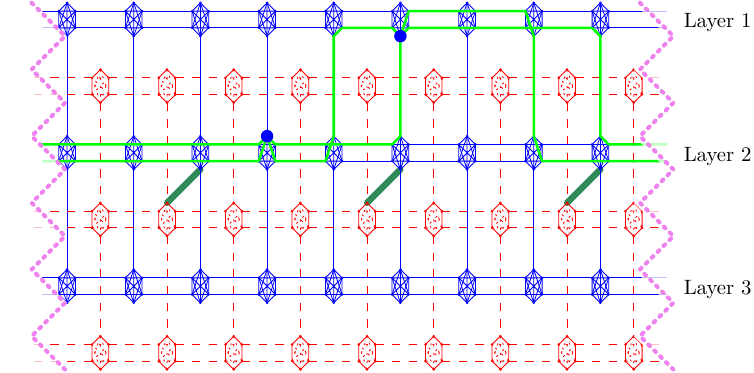}
    \caption{An illustration showing the graph constructed in the proof of \Cref{thm: counterex edge conn} for the values $k=4$, $p = 3$, and $r= 3$. Graph $R$ is dashed (red) while $B$ is solid (blue). The graph is drawn on a flat cylinder (the left and right boundaries are identified).}
    \label{fig: counterexample}
\end{figure}

\begin{proof}   
The idea is to construct two isomorphic $k$-plane graphs $R$ and $B$ (names correspond to colors used in Figure \ref{fig: counterexample}) that are $2\floor{\sqrt{k}}$-vertex-connected, and then to interleave the two disjoint copies, and connect them with a set of $p$ edges that are far apart from each other. The $p$ edges will form a unique minimum edge cut, and any set of $p$ vertices that cover these edges will form a minimum vertex cut. We describe this construction below in some detail. 

The graph $B$ (and likewise $R$) is constructed as follows. We begin with a clique on $2\lfloor \sqrt{k} \rfloor + 2$ vertices, drawn with the vertices in convex position and edges as straight-line segments. This yields an intersection-simple drawing in which each edge is crossed at most $\lfloor \sqrt{k} \rfloor \times \lfloor \sqrt{k} \rfloor \leq k$ times. We then take $\max\{2\floor{\sqrt{k}},rp\}$ copies of the clique arranged in a cyclical fashion and connect two consecutive cliques by a matching of $\floor{\sqrt{k}}$ edges. This forms a \textit{layer} of the drawing (Figure \ref{fig: counterexample}). We take $r$ layers, arrange them concentrically, and connect a clique of one layer with the copy of the same clique on a consecutive layer by a single edge. (This will therefore add $\max\{2\floor{\sqrt{k}},rp\}$ edges between two consecutive layers.) This completes the construction of $B$. One can argue that the graph is $2\floor{\sqrt{k}}$-vertex-connected using Menger's theorem \cite{Die12} which states that it suffices to show that any pair of non-adjacent vertices $u$ and $v$ can be connected by $2\floor{\sqrt{k}}$-vertex-disjoint paths. If $u$ and $v$ are within the same layer, the vertex-disjoint paths can be found by using the $2\floor{\sqrt{k}}$-matching edges that are incident on each clique. If $u$ and $v$ are in distinct layers $i$ and $j$,  the vertex-disjoint paths can be found using the $2\floor{\sqrt{k}}$-matching edges incident on each clique and $2\floor{\sqrt{k}}$ cliques on each layer to re-route the paths from layer $i$ to layer $j$; this is feasible since there are $\max\{2\floor{\sqrt{k}},rp\}$ cliques in each layer. (Figure \ref{fig: counterexample} illustrates this using thick light-green edges that connects the two fat blue vertices.)

Having constructed graphs $R$ and $B$, we interleave them such that there are $2r$ concentric layers in total, with a layer of $R$ alternating with a layer of $B$. Then we connect the two graphs by inserting $p$ edges (shown in fat dark-green in Figure \ref{fig: counterexample}) between the $\ceil{r/2}^\text{th}$ layer of $R$ and the $\ceil{r/2}^\text{th}$ layer of $B$. The $p$ edges are placed $r$ cliques apart; this is feasible by $\max\{2\floor{\sqrt{k}},rp\}$ cliques in each layer. This gives the graph $G$. As each edge is crossed at most $k$ times, $G$ is $k$-plane. Since each of $R$ and $B$ is $2\floor{\sqrt{k}}$-vertex-connected, each of them is also $2\floor{\sqrt{k}}$-edge-connected. As $p < 2\floor{\sqrt{k}}$, the $p$ edges form a unique minimum edge cut of $G$. Any set of $p$ vertices that covers all $p$ edges is a minimum vertex cut of $G$. It is now easy to verify that all three conditions in \Cref{thm: counterex edge conn} hold.  
\end{proof}

\begin{figure}
\centering
    \includegraphics[scale=0.75]{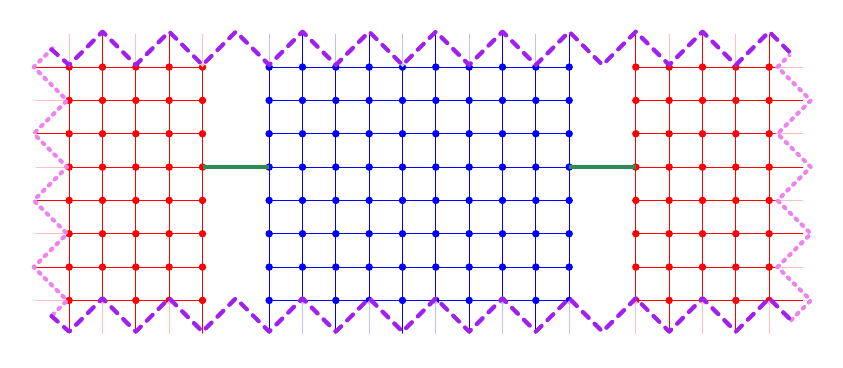}
    \caption{A toroidal graph obtained by connecting two grids by two edges on opposite sides of the grid. (The drawing is on the flat torus; identify the top with the bottom side and the left with the right side.)}
    \label{fig: torus}
\end{figure}


\section{Concluding Remarks}\label{sec: outlook}

In this paper, we showed that embedded graphs with small ribbon radius share the property with planar graphs that all vertices of a minimum vertex cut lie in a bounded diameter subgraph of $\Lambda(G)$. This property enabled a linear-time algorithm to test vertex and edge connectivity for many classes of near-planar graphs. One may ask whether our techniques extend to graphs drawn on other surfaces, such as a torus. We believe that this is unlikely. Consider the graph in Figure \ref{fig: torus} for an example. This graph is obtained by connecting two disjoint copies of planar grid graphs, each of which is 3-vertex-connected, with two edges on opposite sides of the grids. The two edges connecting the grids form a unique minimum edge cut, and contracting the two edges gives a unique minimum vertex cut. The face distance between endpoints of the two edges can be made arbitrarily large; so this example serves as an obstacle to generalizing our results to toroidal graphs. In another direction, one wonders what properties and algorithms of planar graphs can be generalized to embedded graphs with small ribbon radius. Of special interest are weighted versions: 
Can a minimum edge cut of an embedded edge-weighted graph with small ribbon radius be computed as fast as for planar graphs? 








\bibliography{bib2doi}

\begin{thebibliography}{10}

\bibitem{ackerman}
Eyal Ackerman.
\newblock On topological graphs with at most four crossings per edge.
\newblock {\em Comput. Geom.}, 85, 2019.
\newblock \href {https://doi.org/10.1016/j.comgeo.2019.101574} {\path{doi:10.1016/j.comgeo.2019.101574}}.

\bibitem{crossing_lemma_1}
M.~Ajtai, V.~Chvátal, M.M. Newborn, and E.~Szemerédi.
\newblock Crossing-free subgraphs.
\newblock In {\em Theory and Practice of Combinatorics}, volume~60 of {\em North-Holland Mathematics Studies}, pages 9--12. North-Holland, 1982.
\newblock \href {https://doi.org/10.1016/S0304-0208(08)73484-4} {\path{doi:10.1016/S0304-0208(08)73484-4}}.

\bibitem{baker1994approximation}
Brenda~S. Baker.
\newblock Approximation algorithms for {NP}-complete problems on planar graphs.
\newblock {\em J. {ACM}}, 41(1):153--180, 1994.
\newblock \href {https://doi.org/10.1145/174644.174650} {\path{doi:10.1145/174644.174650}}.

\bibitem{bekos_optimal}
Michael~A. Bekos, Michael Kaufmann, and Chrysanthi~N. Raftopoulou.
\newblock On optimal 2- and 3-planar graphs.
\newblock In {\em 33rd International Symposium on Computational Geometry, SoCG 2017, July 4-7, 2017, Brisbane, Australia}, volume~77 of {\em LIPIcs}, pages 16:1--16:16. Schloss Dagstuhl - Leibniz-Zentrum f{\"{u}}r Informatik, 2017.
\newblock \href {https://doi.org/10.4230/LIPIcs.SoCG.2017.16} {\path{doi:10.4230/LIPIcs.SoCG.2017.16}}.

\bibitem{bekos_dframe}
Michael~A. Bekos, Giordano~Da Lozzo, Svenja~M. Griesbach, Martin Gronemann, Fabrizio Montecchiani, and Chrysanthi~N. Raftopoulou.
\newblock Book embeddings of \emph{k}-framed graphs and \emph{k}-map graphs.
\newblock {\em Discret. Math.}, 347(1):113690, 2024.
\newblock \href {https://doi.org/10.1016/j.disc.2023.113690} {\path{doi:10.1016/j.disc.2023.113690}}.

\bibitem{esa_paper}
Therese Biedl, Prosenjit Bose, and Karthik Murali.
\newblock A parameterized algorithm for vertex and edge connectivity of embedded graphs.
\newblock In {\em 32nd Annual European Symposium on Algorithms, {ESA} 2024, September 2-4, 2024, Royal Holloway, London, United Kingdom}, volume 308 of {\em LIPIcs}, pages 24:1--24:15. Schloss Dagstuhl - Leibniz-Zentrum f{\"{u}}r Informatik, 2024.
\newblock \href {https://doi.org/10.4230/LIPIcs.ESA.2024.24} {\path{doi:10.4230/LIPIcs.ESA.2024.24}}.

\bibitem{arxivFull}
Therese Biedl and Karthik Murali.
\newblock On computing the vertex connectivity of 1-plane graphs.
\newblock {\em CoRR}, abs/2212.06782, 2022.
\newblock \href {https://arxiv.org/abs/2212.06782} {\path{arXiv:2212.06782}}, \href {https://doi.org/10.48550/arXiv.2212.06782} {\path{doi:10.48550/arXiv.2212.06782}}.

\bibitem{biedl2022computing}
Therese Biedl and Karthik Murali.
\newblock On computing the vertex connectivity of 1-plane graphs.
\newblock In {\em 50th International Colloquium on Automata, Languages, and Programming, {ICALP} 2023, July 10-14, 2023, Paderborn, Germany}, volume 261 of {\em LIPIcs}, pages 23:1--23:16. Schloss Dagstuhl - Leibniz-Zentrum f{\"{u}}r Informatik, 2023.
\newblock Full version in \cite{arxivFull}.
\newblock \href {https://doi.org/10.4230/LIPIcs.ICALP.2023.23} {\path{doi:10.4230/LIPIcs.ICALP.2023.23}}.

\bibitem{bodlaneder_nice}
Hans~L. Bodlaender and Ton Kloks.
\newblock Efficient and constructive algorithms for the pathwidth and treewidth of graphs.
\newblock {\em J. Algorithms}, 21(2):358--402, 1996.
\newblock \href {https://doi.org/10.1006/jagm.1996.0049} {\path{doi:10.1006/jagm.1996.0049}}.

\bibitem{chalermsook2004deterministic}
Parinya Chalermsook, Jittat Fakcharoenphol, and Danupon Nanongkai.
\newblock A deterministic near-linear time algorithm for finding minimum cuts in planar graphs.
\newblock In {\em Proceedings of the Fifteenth Annual {ACM-SIAM} Symposium on Discrete Algorithms, {SODA} 2004, New Orleans, Louisiana, USA, January 11-14, 2004}, volume~4, pages 828--829. {SIAM}, 2004.
\newblock URL: \url{http://dl.acm.org/citation.cfm?id=982792.982916}.

\bibitem{chen2002map}
Zhi{-}Zhong Chen, Michelangelo Grigni, and Christos~H. Papadimitriou.
\newblock Map graphs.
\newblock {\em J. {ACM}}, 49(2):127--138, 2002.
\newblock \href {https://doi.org/10.1145/506147.506148} {\path{doi:10.1145/506147.506148}}.

\bibitem{courcelle1990monadic}
Bruno Courcelle.
\newblock The monadic second-order logic of graphs. i. recognizable sets of finite graphs.
\newblock {\em Inf. Comput.}, 85(1):12--75, 1990.
\newblock \href {https://doi.org/10.1016/0890-5401(90)90043-H} {\path{doi:10.1016/0890-5401(90)90043-H}}.

\bibitem{didimo2019survey}
Walter Didimo, Giuseppe Liotta, and Fabrizio Montecchiani.
\newblock A survey on graph drawing beyond planarity.
\newblock {\em {ACM} Comput. Surv.}, 52(1):4:1--4:37, 2019.
\newblock \href {https://doi.org/10.1145/3301281} {\path{doi:10.1145/3301281}}.

\bibitem{Die12}
Reinhard Diestel.
\newblock {\em Graph Theory, 4th Edition}, volume 173 of {\em Graduate texts in mathematics}.
\newblock Springer, 5th edition, 2012.

\bibitem{eppstein2002subgraph}
David Eppstein.
\newblock Subgraph isomorphism in planar graphs and related problems.
\newblock {\em J. Graph Algorithms Appl.}, 3(3):1--27, 1999.
\newblock \href {https://doi.org/10.7155/jgaa.00014} {\path{doi:10.7155/jgaa.00014}}.

\bibitem{ford1956maximal}
Lester~Randolph Ford and Delbert~R. Fulkerson.
\newblock Maximal flow through a network.
\newblock {\em Canadian journal of Mathematics}, 8:399--404, 1956.

\bibitem{frank1994edge}
Andr{\'a}s Frank.
\newblock On the edge-connectivity algorithm of {N}agamochi and {I}baraki.
\newblock Technical Report QP-2009-01, Egerv{\'a}ry Research Group, Budapest, 2009.
\newblock {O}riginally published in 1994 as techreport at Universit{\'e} J. Fourier, Grenoble.
\newblock URL: \url{https://andrasfrank.web.elte.hu/cikkek/FrankR2.pdf}.

\bibitem{gabow1991matroid}
Harold~N. Gabow.
\newblock A matroid approach to finding edge connectivity and packing arborescences.
\newblock In {\em Proceedings of the 23rd Annual {ACM} Symposium on Theory of Computing, May 5-8, 1991, New Orleans, Louisiana, {USA}}, pages 112--122. {ACM}, 1991.
\newblock \href {https://doi.org/10.1145/103418.103436} {\path{doi:10.1145/103418.103436}}.

\bibitem{gabow2006using}
Harold~N. Gabow.
\newblock Using expander graphs to find vertex connectivity.
\newblock {\em J. {ACM}}, 53(5):800--844, 2006.
\newblock \href {https://doi.org/10.1145/1183907.1183912} {\path{doi:10.1145/1183907.1183912}}.

\bibitem{gomory1961multi}
Ralph~E. Gomory and Tien~Chung Hu.
\newblock Multi-terminal network flows.
\newblock {\em Journal of the Society for Industrial and Applied Mathematics}, 9(4):551--570, 1961.

\bibitem{MonikaHenzinger}
Monika Henzinger, Satish Rao, and Di~Wang.
\newblock Local flow partitioning for faster edge connectivity.
\newblock {\em {SIAM} J. Comput.}, 49(1):1--36, 2020.
\newblock \href {https://doi.org/10.1137/18M1180335} {\path{doi:10.1137/18M1180335}}.

\bibitem{hong2020beyond}
Seok{-}Hee Hong and Takeshi Tokuyama, editors.
\newblock {\em Beyond Planar Graphs, Communications of {NII} Shonan Meetings}.
\newblock Springer, 2020.
\newblock \href {https://doi.org/10.1007/978-981-15-6533-5} {\path{doi:10.1007/978-981-15-6533-5}}.

\bibitem{hopcroft1973dividing}
John~E. Hopcroft and Robert~Endre Tarjan.
\newblock Dividing a graph into triconnected components.
\newblock {\em {SIAM} J. Comput.}, 2(3):135--158, 1973.
\newblock \href {https://doi.org/10.1137/0202012} {\path{doi:10.1137/0202012}}.

\bibitem{italiano2011improved}
Giuseppe~F. Italiano, Yahav Nussbaum, Piotr Sankowski, and Christian Wulff-Nilsen.
\newblock Improved algorithms for min cut and max flow in undirected planar graphs.
\newblock In {\em Proceedings of the forty-third annual ACM symposium on Theory of computing}, pages 313--322. {ACM}, 2011.
\newblock \href {https://doi.org/10.1145/1993636.1993679} {\path{doi:10.1145/1993636.1993679}}.

\bibitem{Kawarabayashi}
Ken{-}ichi Kawarabayashi and Mikkel Thorup.
\newblock Deterministic edge connectivity in near-linear time.
\newblock {\em J. {ACM}}, 66(1):4:1--4:50, 2019.
\newblock \href {https://doi.org/10.1145/3274663} {\path{doi:10.1145/3274663}}.

\bibitem{kleitman1969methods}
Daniel Kleitman.
\newblock Methods for investigating connectivity of large graphs.
\newblock {\em IEEE Transactions on Circuit Theory}, 16(2):232--233, 1969.
\newblock \href {https://doi.org/10.1109/TCT.1969.1082941} {\path{doi:10.1109/TCT.1969.1082941}}.

\bibitem{kloks1994treewidth}
Ton Kloks.
\newblock {\em Treewidth, Computations and Approximations}, volume 842 of {\em Lecture Notes in Computer Science}.
\newblock Springer, 1994.
\newblock \href {https://doi.org/10.1007/BFb0045375} {\path{doi:10.1007/BFb0045375}}.

\bibitem{lkacki2011min}
Jakub Lacki and Piotr Sankowski.
\newblock Min-cuts and shortest cycles in planar graphs in ${O}(n \log\log n)$ time.
\newblock In {\em Algorithms - {ESA} 2011 - 19th Annual European Symposium, Saarbr{\"{u}}cken, Germany, September 5-9, 2011. Proceedings}, volume 6942 of {\em Lecture Notes in Computer Science}, pages 155--166. Springer, 2011.
\newblock \href {https://doi.org/10.1007/978-3-642-23719-5\_14} {\path{doi:10.1007/978-3-642-23719-5\_14}}.

\bibitem{crossing_lemma_2}
Frank~Thomson Leighton.
\newblock {\em Complexity issues in VLSI: optimal layouts for the shuffle-exchange graph and other networks}.
\newblock MIT Press, Cambridge, MA, USA, 1983.

\bibitem{JasonLi}
Jason Li.
\newblock Deterministic mincut in almost-linear time.
\newblock In {\em {STOC} '21: 53rd Annual {ACM} {SIGACT} Symposium on Theory of Computing, Virtual Event, Italy, June 21-25, 2021}, pages 384--395. {ACM}, 2021.
\newblock \href {https://doi.org/10.1145/3406325.3451114} {\path{doi:10.1145/3406325.3451114}}.

\bibitem{nagamochi1992computing}
Hiroshi Nagamochi and Toshihide Ibaraki.
\newblock Computing edge-connectivity in multigraphs and capacitated graphs.
\newblock {\em {SIAM} J. Discret. Math.}, 5(1):54--66, 1992.
\newblock \href {https://doi.org/10.1137/0405004} {\path{doi:10.1137/0405004}}.

\bibitem{DBLP:journals/algorithmica/NagamochiI92}
Hiroshi Nagamochi and Toshihide Ibaraki.
\newblock A linear-time algorithm for finding a sparse k-connected spanning subgraph of a k-connected graph.
\newblock {\em Algorithmica}, 7(5{\&}6):583--596, 1992.
\newblock \href {https://doi.org/10.1007/BF01758778} {\path{doi:10.1007/BF01758778}}.

\bibitem{pach1997graphs}
J{\'a}nos Pach and G{\'e}za T{\'o}th.
\newblock Graphs drawn with few crossings per edge.
\newblock {\em Comb.}, 17(3):427--439, 1997.
\newblock \href {https://doi.org/10.1007/BF01215922} {\path{doi:10.1007/BF01215922}}.

\bibitem{podderyugin1973algorithm}
V.D. Podderyugin.
\newblock An algorithm for finding the edge connectivity of graphs.
\newblock {\em Vopr. Kibern}, 2(136):2, 1973.

\bibitem{saranurak}
Thatchaphol Saranurak.
\newblock A simple deterministic algorithm for edge connectivity.
\newblock In {\em 4th Symposium on Simplicity in Algorithms, {SOSA} 2021, Virtual Conference, January 11-12, 2021}, pages 80--85. SIAM, {SIAM}, 2021.
\newblock \href {https://doi.org/10.1137/1.9781611976496.9} {\path{doi:10.1137/1.9781611976496.9}}.

\bibitem{SaranurakY22}
Thatchaphol Saranurak and Sorrachai Yingchareonthawornchai.
\newblock Deterministic small vertex connectivity in almost linear time.
\newblock In {\em 63rd {IEEE} Annual Symposium on Foundations of Computer Science, {FOCS} 2022, Denver, CO, USA, October 31 - November 3, 2022}, pages 789--800. {IEEE}, 2022.
\newblock \href {https://doi.org/10.1109/FOCS54457.2022.00080} {\path{doi:10.1109/FOCS54457.2022.00080}}.

\bibitem{schaefer2012graph}
Marcus Schaefer.
\newblock The graph crossing number and its variants: A survey.
\newblock {\em Electronic Journal of Combinatorics}, 1000, 2013.
\newblock URL: \url{https://api.semanticscholar.org/CorpusID:122432265}, \href {https://doi.org/10.37236/2713} {\path{doi:10.37236/2713}}.

\bibitem{schumacher1986struktur}
Heinz Schumacher.
\newblock Zur {S}truktur 1-planarer {G}raphen.
\newblock {\em Mathematische Nachrichten}, 125(1):291--300, 1986.
\newblock \href {https://doi.org/10.1002/mana.19861250122} {\path{doi:10.1002/mana.19861250122}}.

\bibitem{stoer1997simple}
Mechthild Stoer and Frank Wagner.
\newblock A simple min-cut algorithm.
\newblock {\em J. {ACM}}, 44(4):585--591, 1997.
\newblock \href {https://doi.org/10.1145/263867.263872} {\path{doi:10.1145/263867.263872}}.

\bibitem{tarjan1972depth}
Robert~Endre Tarjan.
\newblock Depth-first search and linear graph algorithms.
\newblock {\em {SIAM} J. Comput.}, 1(2):146--160, 1972.
\newblock \href {https://doi.org/10.1137/0201010} {\path{doi:10.1137/0201010}}.

\end{thebibliography}

\end{document}